\documentclass[a4paper,11pt]{article}

\usepackage[utf8]{inputenc}
\usepackage{fullpage}
\usepackage{graphicx}
\usepackage{amsmath,amssymb,amsthm}
\usepackage{enumerate}
\usepackage[numbers,sort&compress]{natbib}
\usepackage[%
  pdftitle={Disimplicial arcs, transitive vertices, and disimplicial eliminations},%
  pdfauthor={Martiniano Eguía and Francisco J.\ Soulignac},%
  pdfcreator={},%
  pdfsubject={Disimplicial arcs, transitive vertices, and disimplicial eliminations},%
  pdfkeywords={disimplicial arcs, bisimplicial edges of bipartite graphs, disimplicial elimination schemes, bisimplicial elimination schemes, diclique irreducible digraphs, transitive digraphs, dedekind digraphs},%
  colorlinks=true,%
  linkcolor=blue,%
  citecolor=blue,%
  urlcolor=blue]{hyperref}

\newtheorem{theorem}{Theorem}
\newtheorem{lemma}[theorem]{Lemma}
\newtheorem{corollary}[theorem]{Corollary}
\newtheorem{proposition}[theorem]{Proposition}

\title{Disimplicial arcs, transitive vertices, and disimplicial eliminations}
\author{Martiniano Eguía\thanks{Departamento de Computaci\'on, FCEN, Universidad de Buenos Aires, 
Buenos Aires, Argentina.} \and Francisco J.\ Soulignac\thanks{CONICET, Departamento de Computaci\'on, FCEN, Universidad de Buenos Aires, and Departamento de Ciencia y Técnica, Universidad Nacional de Quilmes, Bernal, Argentina.}}

\date{\normalsize\texttt{meguia@dc.uba.ar, francisco.soulignac@unq.edu.ar}}

\newcommand{\SPLIT}{\ensuremath{\mathrm{Split}}}
\newcommand{\JOIN}{\ensuremath{\mathrm{Join}}}
\newcommand{\IN}{\ensuremath{\mathrm{in}}}
\newcommand{\OUT}{\ensuremath{\mathrm{out}}}
\newcommand{\REPG}{\ensuremath{\mathrm{Repr}}}
\newcommand{\TN}{\ensuremath{\theta}}
\newcommand{\REP}{\ensuremath{\mathrm{repr}}}

\begin{document}
\maketitle

\begin{abstract}
 In this article we deal with the problems of finding the disimplicial arcs of a digraph and recognizing some interesting graph classes defined by their existence.  A \emph{diclique} of a digraph is a pair $V \to W$ of sets of vertices such that $v \to w$ is an arc for every $v \in V$ and $w \in W$.  An arc $v \to w$ is \emph{disimplicial} when $N^-(w) \to N^+(v)$ is a diclique.  We show that the problem of finding the disimplicial arcs is equivalent, in terms of time and space complexity, to that of locating the transitive vertices.  As a result, an efficient algorithm to find the bisimplicial edges of bipartite graphs is obtained.  Then, we develop simple algorithms to build disimplicial elimination schemes, which can be used to generate bisimplicial elimination schemes for bipartite graphs.  Finally, we study two classes related to perfect disimplicial elimination digraphs, namely weakly diclique irreducible digraphs and diclique irreducible digraphs.  The former class is associated to finite posets, while the latter corresponds to dedekind complete finite posets.

 \vspace*{.2\baselineskip} {\bf Keywords:} disimplicial arcs, bisimplicial edges of bipartite graphs, disimplicial elimination schemes, bisimplicial elimination schemes, diclique irreducible digraphs, transitive digraphs, dedekind digraphs.
\end{abstract}

\section{Introduction}

Disimplicial arcs are important when Gaussian elimination is performed on a sparse matrix, as they correspond to the entries that preserve zeros when chosen as pivots.  Let $M$ be an $n\times n$ matrix and $G(M)$ be the digraph that has a vertex $r_i$ for each row of $M$ and a vertex $c_j$ for each column of $M$, where $r_i \to c_j$ is an arc of $G(M)$ if and only if $m_{ij} \neq 0$.  The \emph{fill-in} of $m_{ij}$ is the number of zero entries of $M$ that change into a non-zero value when $m_{ij}$ is the next pivot.  To reduce the extra space required to represent $M$, the idea is to pivot with an entry of minimum fill-in.   The extreme case in which $m_{ij}$ has zero fill-in happens when $m_{xy} \neq 0$ for every $x,y$ such that $m_{iy} \neq 0$ and $m_{xj} \neq 0$.  Translated to $G(M)$, the arc $r_i \to c_j$ has ``zero fill-in'' if and only if $r_x \to c_y$ is an arc of $G(M)$ for every $x,y$ such that $r_i \to c_y$ and $r_x \to c_j$ are arcs of $G(M)$.  In graph theoretical terms, the arcs with ``zero fill-in'' are the \emph{disimplicial} arcs of $G(M)$, i.e., the arcs that belong to a unique diclique of $G(M)$.

The discussion above is usually described in terms of bisimplicial edges of bipartite graphs, and not in terms of the disimplicial arcs of digraphs.  We emphasize that these concepts are equivalent for $G(M)$.  Say that a digraph is a \emph{source-sink (ST) graph} when every vertex is either a source or a sink.  Clearly, there are two ST graphs for every bipartite graph $G = (V, W, E)$, depending on whether the edges are oriented from $V$ to $W$ or from $W$ to $V$.  Moreover, there is a one-to-one correspondence between the bisimplicial edges of $G$ and the disimplicial arcs of its orientations.  Thus, it is unimportant whether $G(M)$ is oriented or non-oriented.  There is a reason why we work with digraphs in this manuscript that has to do with the fact that we relate the disimplicial arcs of ST graphs with the vertices of transitive digraphs.  So, in this way we need not describe how the edges of a non-oriented graph should be oriented.

Finding the disimplicial arcs of a digraph $D$ is an interesting and somehow unexplored problem.  It is rather simple to determine if an arc is disimplicial in $O(m)$ time, thus all the disimplicial arcs can be obtained in $O(m^2)$ time and $O(m)$ space.  (We use $n$ and $m$ to denote the number of vertices and arcs of $D$.  Also, and we assume $D$ connected, hence $m \geq n-1$.)  As we shall see in Section~\ref{sec:disimplicial vs transitive}, this problem can be reduced to that of finding the disimplicial arcs of an ST graph $G$.  As it was noted by Bomhoff and Manthey in~\cite{BomhoffMantheyDAM2013}, the twin reduction $G'$ of $G$ can have at most $\tau$ disimplicial arcs, where $\tau < n$ is the number of \emph{thin arcs} of $G'$.  This yields an $O(\tau m)$ time and $O(m)$ space algorithm to find all the disimplicial arcs of $G$.  Bomhoff and Manthey also show that certain random graphs have a constant number of thin arcs, in which case the algorithm takes linear time.  Fast matrix multiplication can also be used to obtain the disimplicial arcs in $O(n^\omega)$ time, but at the expense of $\Theta(n^2)$ space.  This algorithm is, therefore, not convenient for $G$ sparse.

In the process of Gaussian elimination not only the next pivot is important; the whole sequence of pivots is of matter.  Ideally, we would like to use no extra space throughout the algorithm to represent the input matrix $M$.  Thus, no zero entry of $M$ should be changed into a non-zero entry in the entire elimination process.  In~\cite{GolumbicGossJGT1978}, Golumbic~and~Goss observed that this problem corresponds to finding a perfect elimination scheme of $G(M)$.  An \emph{elimination scheme} of a digraph $G$ is a sequence of arcs $S=v_1 \to w_1, \ldots, v_k \to w_k$ such that $v_i \to w_i$ is disimplicial in $G_i = G \setminus \{v_1, w_1, \ldots, v_{i-1}, w_{i-1}\}$, for every $1 \leq i \leq k$.  The sequence $S$ is \emph{maximal} when $G_k$ has no disimplicial arcs, while it is \emph{perfect} when $G_k$ has no edges at all.  Not every digraph admits a perfect elimination scheme; those that do admit it are said to be \emph{perfect elimination}.  In~\cite{GolumbicGossJGT1978} it is proven that every maximal elimination scheme of $G$ is perfect when $G$ is a perfect elimination ST graph.  

The first algorithm to compute a maximal elimination scheme of an ST graph was given by Golumbic and Goss in the aforementioned article.  The algorithm works by iteratively removing the endpoints of a disimplicial arc until no more disimplicial arcs remain.  The complexity of their algorithm is not explicit in~\cite{GolumbicGossJGT1978}; if the disimplicial arcs are searched for as in~\cite{BomhoffMantheyDAM2013}, then $O(\tau nm) = O(n^2m)$ time and $O(m)$ space is required.  Goh and Rotem~\cite{GohRotemIPL1982} propose an $O(n^3)$ time and $O(n^2)$ space algorithm, which was later improved by Bomhoff so as to run in $O(nm)$ time~\cite{Bomhoff2011}.  For the densest cases, the algorithm by Spinrad~\cite{SpinradDAM2004} runs in $O(n^3/\log n)$ time and $O(n^2)$ space.  In~\cite{Bomhoff2011}, Bomhoff shows the most efficient algorithm for the sparse case up to this date, requiring $O(m^2)$ time while consuming $O(m)$ space.

A common restriction of the zero fill-in problem is to ask all the pivots to belong to the diagonal of $M$.  This problem is equivalent to that of finding a perfect elimination scheme whose arcs all belong to some input matching $E$ of $G(M)$.  The matching $E$ represents the arcs that correspond to the diagonal entries of $M$.  Again, this problem can be solved by finding an elimination scheme $S \subseteq E$ such that no arc of $E \setminus S$ is disimplicial in $G(M) \setminus V(S)$~\cite{GolumbicGossJGT1978}.  Rose and Tarjan~\cite{RoseTarjanSJAM1978} devise two algorithms for finding such an elimination scheme of an ST graph, one runs in $O(nm)$ time and space, and the the other requires $O(n^2m)$ time but consumes only $O(m)$ space.  The $O(m^2)$ time algorithm by Bomhoff for finding an unrestricted scheme works in $O(nm)$ time and $O(m)$ space for this case.

In this manuscript we consider two classes related to perfect elimination digraphs, namely diclique irreducible and weakly diclique irreducible graphs.  As far as our knowledge extends, these graph classes were not studied previously.  The motivating question is when does an ST graph $G$ admit a perfect matching $E$ of disimplicial arcs.  For such graphs, any permutation of $E$ is a perfect elimination scheme, thus the pivots of the matrix associated to $G$ can be taken in any order from $E$ with zero fill-in.  How to answer this question efficiently is already known, as it reduces to establishing if the thin arcs form a perfect matching of disimplicial arcs (see~\cite{BomhoffMantheyDAM2013} and Section~\ref{sec:disimplicial vs transitive}).  Nevertheless, the class defined by these graphs has some interesting properties.  Note that, by definition, the arc set of $G$ can be partitioned into a family of dicliques, all of which contain a disimplicial arc.  This resembles the definition of \emph{weakly clique irreducible} graphs~\cite{WangCN2003}, in which every edge should belong to a clique that contains a simplicial edge.  For this reason is that we say a digraph $G$ is \emph{weakly diclique irreducible (WDI)} when every arc of $G$ belongs to a diclique that contains a disimplicial arc.  The word ``weakly'' in the definition of weakly clique irreducible graphs comes from the fact that this is a superclass of the clique irreducible graphs.  A graph is \emph{clique irreducible} when every maximal clique has a simplicial edge~\cite{WallisZhangJCMCC1990}.  By analogy, we define the \emph{diclique irreducible (DI)} digraphs as those digraphs in which every maximal diclique has a disimplicial arc.  

We are mainly interesting on the above problems restricted to sparse digraphs, where sparseness is well distributed.  By this, we mean that we expect each subdigraph to be sparse as well.  The \emph{arboricity} $\alpha$ of a digraph correctly measures this kind of density, as it is the maximum value $e/v$ for a subdigraph with $e$ arcs and $v+1$ vertices~\cite{Nash-WilliamsJLMS1964}.  So, rephrasing, we are mainly interest in the case in which $\alpha \ll n/2$.  Sometimes, however, our algorithms are most efficient when the input digraph is sparse in a stronger sense, as it must have low $h$-index or low maxdegree.  The \emph{$h$-index} is the maximum $\eta$ such that the graph has $\eta$ vertices with degree at least $\eta$, while the \emph{maxdegree} $\Delta$ is the maximum among the degrees of the vertices; it is well known that $\alpha \leq \eta \leq \Delta$ (see e.g.~\cite{LinSoulignacSzwarcfiterTCS2012}).

The article is organized as follows. In Section~\ref{sec:preliminaries} we introduce the terminology used.  In Section~\ref{sec:disimplicial vs transitive} we show two simple operators that transform disimplicial arcs into transitive vertices and back.  As a consequence, finding the disimplicial arcs and finding the transitive vertices are equally hard problems.  In particular, an $O(\min\{\alpha,\tau\}m)$ time and $O(m)$ space algorithm for a digraph with $\tau$ thin arcs is obtained, improving over the algorithm in~\cite{BomhoffMantheyDAM2013}.  This algorithm is optimal unless an $o(\alpha m)$ time algorithm for finding the transitive vertices of a sparse graph is obtained, which is an open problem~\cite{Spinrad2003}.  In Section~\ref{sec:disimplicial elimination} we study the problem of generating maximal elimination schemes.  For the general case we show an algorithm that runs in $O(\min\{\eta\Delta, m\}m)$ time and $O(m)$ space.  The improvement with respect to the algorithm in~\cite{Bomhoff2011} is significant for graphs with $\Delta \ll \sqrt{m}$.  For the case in which all the arcs of the elimination scheme must belong to an input matching, we develop an $O(\alpha m)$ time and $O(m)$ space; which is a major improvement for sparse graphs.  The classes of WDI and DI graphs are studied in Section~\ref{sec:WDI and DI}.  We show that the operators of Section~\ref{sec:disimplicial vs transitive} provide a bijection $f$ between a subfamily of WDI digraphs and finite posets.  When DI digraphs are considered, the range of $f$ are precisely the dedekind complete finite posets, i.e., the finite posets that satisfy the least upper bound property.  With respect to the recognition problems, it can be solved in $O(\alpha m)$ time for WDI digraphs and in $O(nm)$ time for DI digraphs.  Finally, in Section~\ref{sec:further remarks} we translate all the results to bipartite graphs while we provide further remarks.
  
\section{Preliminaries}\label{sec:preliminaries}

A \emph{digraph} is a pair $D = (V(D), E(D))$ where $V(D)$ is finite and $E(D) \subseteq V(D) \times V(D)$; $V(D)$ and $E(D)$ are the \emph{vertex set} and \emph{arc set} of $D$, respectively.  We write $v \to w$ to denote the arc with \emph{endpoints} $v$ and $w$ that \emph{leaves} $v$ and \emph{enters} $w$, regardless of whether $(v,w) \in E(D)$ or not.  Note that our definition allows $D$ to have an arc $v \to v$ for any $v \in V(D)$; in such case, $v$ is a \emph{reflexive} vertex and $v \to v$ is a \emph{loop}.  For $V \subseteq V(D)$, we write $D[V]$ to denote the subdigraph of $D$ \emph{induced} by $V$, and $D \setminus V$ to denote $D[V(D) \setminus V]$. 

For $v \in V(D)$, define $N_D^+(v) = \{w \in V(D) \mid v \to w \in E(D)\}$, $N_D^-(v) = \{w \in V(D) \mid w \to v \in E(D)\}$, and $N_D(v) = N_D^+(v) \cup N_D^-(v)$.  Sets $N_D^+(v)$, $N_D^-(v)$, and $N_D(v)$ are respectively the \emph{out-neighborhood}, \emph{in-neighborhood}, and \emph{neighborhood} of $v$ in $D$, while the members of $N_D^+(v)$, $N_D^-(v)$, and $N_D(v)$ are the \emph{out-neighbors}, \emph{in-neighbors}, and \emph{neighbors} of $v$, respectively.  The \emph{out-degree}, \emph{in-degree}, and \emph{degree} of $v$ are the values $d_D^+(v) = |N_D^+(v)|$, $d_D^-(v) = |N_D^-(v)|$, and $d_D(v) = |N_D(v)|$, respectively.  We omit the subscript from $N$ and $d$ whenever $D$ is clear from context.

For $v \in V(D)$, we say that $v$ is a \emph{source} when $d^-(v) = 0$,  $v$ is a \emph{sink} when $d^+(v) = 0$, and $v$ is \emph{transitive} when $x \to y$ for every $x \in N^-(v)$ and $y \in N^+(v)$.  A digraph is a \emph{source-sink (ST) graph} when it contains only source and sink vertices, while it is \emph{transitive} when it contains only transitive vertices.  A digraph is \emph{simple} when it has no loops, while it is \emph{reflexive} when every vertex is reflexive.  The \emph{reflexive closure} of $D$ is the digraph obtained by adding all the missing loops to $D$ so as to make each vertex reflexive, i.e., the reflexive closure of $D$ is $(V(D), E(D) \cup \{(v,v) \mid v \in V(D)\})$.  An \emph{oriented graph} is a digraph such that $v \to w \in E(D)$ and $w \to v \in E(D)$ only if $v = w$.  An \emph{order graph} is an oriented graph that is simultaneously reflexive and transitive.  Let $\leq$ be the relation on $V(D)$ such that $v\leq w$ if and only if $v \to w \in V(D)$.  Note that $\leq$ is reflexive (resp.\ antisymmetric, transitive) precisely when $D$ is reflexive (resp.\ oriented, transitive).  Thus, $D$ is an order graph if and only if $(V(D), \leq)$ is a finite poset.

For $v \in V(D)$, we write $H_D^+(v) = \{w \in N_D^+(v) \mid d^+(v) \leq d^-(w)\}$ and $H_D^-(v) = \{w \in N_D^-(v) \mid d^-(v) \leq d^+(w)\}$.  In other words, $H_D^+(v)$ has the out-neighbors of $v$ whose in-degree is greater than or equal to the out-degree of $v$, while $H_D^-(v)$ has the out-neighbors of $v$ with in-degree at least $d^-(v)$.  Note that either $v \in H^-(w)$ or $w \in H^+(v)$ for every arc $v \to w \in E(D)$, thus all the arcs of $D$ get visited when all the $H$ sets are traversed.  The values $|H^+_D(v)|$, $|H^-_D(v)|$ are denoted by $h^+_D(v)$ and $h^-_D(v)$, while $h_D(v) = \max\{h_D^+(v), h_D^-(v)\}$.  Again, we omit the subscript $D$ when no ambiguities arise.

We write $n_D$, $m_D$, and $\Delta_D$ to denote the values $|V(D)|$, $|E(D)|$, and $\max_{v \in V(D)}\{d(v)\}$, respectively.  The arboricity and $h$-index are values that measure how dense is a digraph.  We use a non-standard definition of arboricity given by the equivalence in~\cite{Nash-WilliamsJLMS1964}, i.e., the \emph{arboricity} $\alpha_D$ of $D$ is the maximum $e/v$ such that $D$ has a subdigraph with $e$ arcs and $v+1$ vertices.  The \emph{$h$-index} is the value $\eta_D$ such that $D$ has $\eta_D$ vertices with degree at least $\eta_D$.  It is well known that $\alpha_D \leq \eta_D \leq \min\{\Delta, \sqrt{2m_D}\}$, while $h(v) \leq \eta_D$ for every $v \in V(D)$~\cite{ChibaNishizekiSJC1985,LinSoulignacSzwarcfiterTCS2012}.  The time required to multiply two $n\times n$ matrices is denoted by $O(n^\omega)$; up to this date  $2 \leq \omega \leq 2.3729$~\cite{Williams2012}.  As before, we omit the subscripts $D$ whenever possible.  Also, we assume $m > n$ for all the problems considered with no loss of generality.

Two arcs of $D$ are \emph{independent} when they have no common endpoints.  A \emph{matching} is a set $M$ of pairwise independent arcs.  Sometimes we deal with $M$ as if it were the subgraph of $D$ with vertex set $\{v,w \mid v\to w\in M\}$ and arc set $M$.  Thus, we write $V(M)$ to denote the set of vertices entering or leaving an arc of $M$, or we talk about the unique neighbor of $v$ in $M$, etc.  A matching is \emph{perfect} when $V(M) = V(G)$.  

A \emph{diclique} of $D$ is an ordered pair $(V, W) \subseteq V(D) \times V(D)$ such that $v \to w \in E(D)$ for every $v \in V$ and $w \in W$ (note that every vertex in $V \cap W$ is reflexive).  For the sake of notation, we write $V \to W$ to refer to $(V, W)$, regardless of whether $(V, W)$ is a diclique of not.  The term \emph{diclique} is also used to denote the subdigraph $B$ of $D$ with vertex set $V \cup W$ and arc set $\{v \to w \mid v \in V, w \in W\}$; note that $B$ needs not be an induced subdigraph of $D$.  Thus, for instance, we can talk about the arcs of the diclique $B$.  A diclique $V \to W$ of $D$ is \emph{maximal} when $D$ has no diclique $V \cup V' \to W \cup W'$ for $\emptyset \subset V' \cup W' \subseteq V(D)$.  An arc $v\to w \in E(D)$ is \emph{disimplicial} when $B = N(w) \to N(v)$ is a diclique of $D$; note that $B$ is the unique maximal diclique of $D$ that contains $v \to w$.  In such case, the diclique $B$ is said to be \emph{reduced}, i.e., $B$ is \emph{reduced} when it is maximal and it contains a disimplicial arc.

\section{Disimplicial arcs versus transitive vertices}
\label{sec:disimplicial vs transitive}

By definition, a reflexive vertex $v$ is transitive if and only if $v \to v$ is a disimplicial arc.  Hence, we can find out if a digraph $D$ is transitive by looking if all the loops of its reflexive closure $D^*$ are disimplicial.  This result can be easily strengthen so as to make $D^*$ an ST graph.  

For any digraph $D$, define $\SPLIT(D)$ to be the digraph $G$ that has a vertex $\OUT(v)$ for each non-sink vertex $v$, and a vertex $\IN(w)$ for each non-source vertex $w$, where $\OUT(v) \to \IN(w) \in E(G)$ if and only if $v \to w \in E(D)$, for every $v,w \in V(D)$ (see Figure~\ref{fig:split+join}). Clearly, $\OUT(v)$ and $\IN(w)$ are source and sink vertices, resepctively, hence $G$ is an ST graph.  Moreover, the dicliques of $D$ are ``preserved'' into $G$ as in the next proposition.

\begin{figure}
 \centering
 \includegraphics{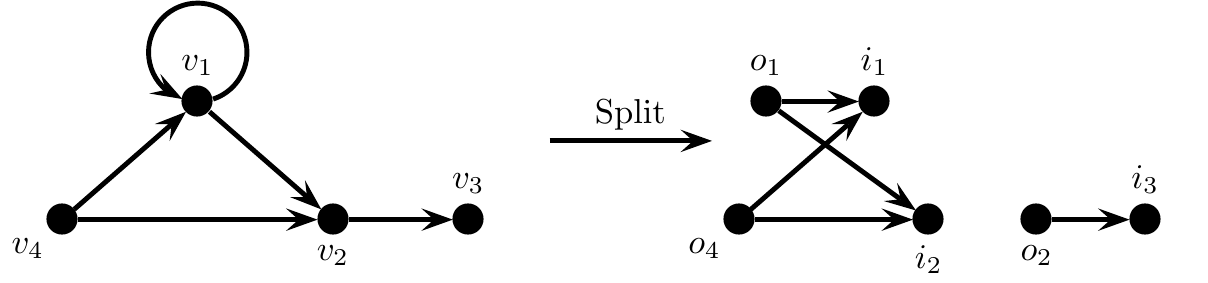}
 
 \includegraphics{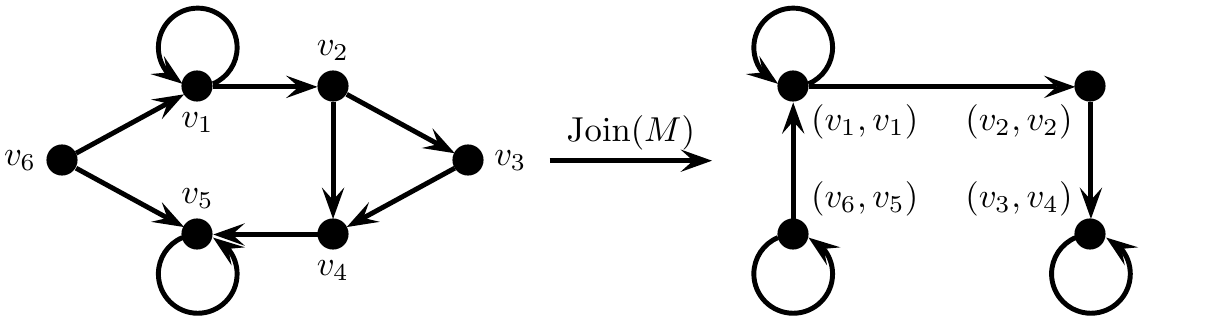}
 \caption{Examples of the operations $\SPLIT(D)$ and $\JOIN(G, M)$ for $M = \{v_1 \to v_1, v_3 \to v_4, v_6 \to v_5\}$.  For the sake of exposition, we write $i_x$ and $o_x$ to denote the vertices $\OUT(v_x)$ and $\IN(v_x)$ of $\SPLIT(D)$, respectively.  Note that $\OUT(v) \to \IN(v)$ is an arc of $\SPLIT(D)$ if only if $v$ is reflexive, while $(v, w)$ is reflexive in $\JOIN(G, M)$ if and only if either $v \to w \in M$ or $v = w$ is reflexive in $G$.}\label{fig:split+join}
\end{figure}

\begin{proposition}\label{prop:SPLIT dicliques}
  Let $D$ be a digraph.  Then, $V \to W$ is a diclique of $D$ if and only if\/ $\OUT(V) \to \IN(W)$ is a diclique of\/ $\SPLIT(D)$, where\/ $\OUT(V) = \{\OUT(v)\}_{v \in V}$ and\/ $\IN(W) = \{\IN(w)\}_{w \in W}$.
\end{proposition}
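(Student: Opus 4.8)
The plan is to prove both implications directly from the definitions, using the defining property of $\SPLIT(D)$ that $\OUT(v) \to \IN(w) \in E(\SPLIT(D))$ if and only if $v \to w \in E(D)$. The only subtlety is making sure the vertices $\OUT(v)$ and $\IN(w)$ actually exist in $\SPLIT(D)$ when we need them, which is exactly where the non-sink / non-source hypotheses come into play.

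First I would handle the forward direction. Suppose $V \to W$ is a diclique of $D$. Then for every $v \in V$ and $w \in W$ we have $v \to w \in E(D)$. In particular, each $v \in V$ has an out-neighbor (namely some $w \in W$), hence $v$ is not a sink, so $\OUT(v)$ is a vertex of $\SPLIT(D)$; symmetrically each $w \in W$ is not a source, so $\IN(w)$ is a vertex of $\SPLIT(D)$. (One should note the degenerate case $W = \emptyset$ or $V = \emptyset$: then $\OUT(V) \to \IN(W)$ is vacuously a diclique, and conversely, so the statement holds trivially; I would dispatch this at the start.) By the defining property of $\SPLIT$, $v \to w \in E(D)$ gives $\OUT(v) \to \IN(w) \in E(\SPLIT(D))$ for all $v \in V$, $w \in W$, which is precisely the statement that $\OUT(V) \to \IN(W)$ is a diclique of $\SPLIT(D)$.

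For the converse, suppose $\OUT(V) \to \IN(W)$ is a diclique of $\SPLIT(D)$; implicit in writing this is that $\OUT(v)$ and $\IN(w)$ are genuine vertices of $\SPLIT(D)$ for each $v \in V$, $w \in W$. Then for every $v \in V$ and $w \in W$, $\OUT(v) \to \IN(w) \in E(\SPLIT(D))$, and again by the defining property of $\SPLIT$ this yields $v \to w \in E(D)$; hence $V \to W$ is a diclique of $D$.

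I do not expect a real obstacle here — the proposition is essentially unwinding the construction. The one place to be careful is the bookkeeping around which vertices exist: the map $v \mapsto \OUT(v)$ is only defined on non-sinks and $w \mapsto \IN(w)$ only on non-sources, so the cleanest exposition states up front that whenever we assert $\OUT(V) \to \IN(W)$ is (or is claimed to be) a diclique, the sets $\OUT(V)$ and $\IN(W)$ are understood to be well-defined, and then observes that in the direction ``$V\to W$ diclique $\Rightarrow$ \dots'' this well-definedness is automatic from the presence of the arcs $v \to w$. With that remark in place, each implication is a single line.
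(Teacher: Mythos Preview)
Your proof is correct and is exactly the intended argument: both directions follow immediately from the defining equivalence $\OUT(v)\to\IN(w)\in E(\SPLIT(D))\iff v\to w\in E(D)$, with the only care needed being the well-definedness of $\OUT(v)$ and $\IN(w)$, which you handle properly. The paper in fact gives no proof of this proposition at all, treating it as self-evident from the construction of $\SPLIT$; your write-up simply makes that one-line unwinding explicit.
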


\begin{corollary}\label{cor:SPLIT disimplicials}
  Let $D$ be a digraph.  Then, $v \to w$ is a disimplicial arc of $D$ if and only if\/ $\OUT(v) \to in(w)$ is a disimplicial arc of\/ $\SPLIT(D)$.
\end{corollary}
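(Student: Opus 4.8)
The plan is to derive the corollary directly from Proposition~\ref{prop:SPLIT dicliques} by matching up the relevant neighbourhoods on the two sides. First I would note that the two statements of the equivalence presuppose the same hypothesis: $v \to w$ is an arc of $D$ exactly when $\OUT(v) \to \IN(w)$ is an arc of $G := \SPLIT(D)$, and in that case $v$ is a non-sink and $w$ is a non-source, so the vertices $\OUT(v)$ and $\IN(w)$ are indeed present in $G$. Hence it suffices to prove the equivalence under the standing assumption $v \to w \in E(D)$.

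Next I would unwind the definition of \emph{disimplicial} on the $\SPLIT(D)$ side. Since $G$ is an ST graph, $\OUT(v)$ is a source and $\IN(w)$ is a sink, so $N_G(\OUT(v)) = N_G^+(\OUT(v))$ and $N_G(\IN(w)) = N_G^-(\IN(w))$; by the definition of $\SPLIT$ these sets equal $\IN(N_D^+(v)) := \{\IN(u)\}_{u \in N_D^+(v)}$ and $\OUT(N_D^-(w)) := \{\OUT(u)\}_{u \in N_D^-(w)}$, respectively (every out-neighbour of $v$ is a non-source and every in-neighbour of $w$ is a non-sink, so these sets are well defined). Therefore $\OUT(v) \to \IN(w)$ is disimplicial in $G$ if and only if $\OUT(N_D^-(w)) \to \IN(N_D^+(v))$ is a diclique of $G$.

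Finally I would apply Proposition~\ref{prop:SPLIT dicliques} with $V := N_D^-(w)$ and $W := N_D^+(v)$ (both nonempty, since $v \in N_D^-(w)$ and $w \in N_D^+(v)$): $\OUT(N_D^-(w)) \to \IN(N_D^+(v))$ is a diclique of $G$ if and only if $N_D^-(w) \to N_D^+(v)$ is a diclique of $D$, which is precisely the condition that $v \to w$ be disimplicial in $D$. Chaining the three equivalences yields the corollary.

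I do not expect a real obstacle here, since everything reduces to Proposition~\ref{prop:SPLIT dicliques}; the only point requiring a little care is the bookkeeping around the neighbourhoods $N(\cdot)$ in the definition of a disimplicial arc. One should observe that in any ST graph a source $s$ satisfies $N(s) = N^+(s)$ and a sink $t$ satisfies $N(t) = N^-(t)$, so that on the $\SPLIT(D)$ side the neighbourhoods appearing in the definition are automatically the ``oriented'' ones, which is exactly what makes them correspond to $N_D^-(w) \to N_D^+(v)$ after the proposition is invoked. No case analysis or computation beyond this is needed.
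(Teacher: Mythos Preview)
Your argument is correct and is exactly the route the paper has in mind: the corollary is stated without proof because it follows immediately from Proposition~\ref{prop:SPLIT dicliques} by the neighbourhood bookkeeping you describe. The only remark is that you are (rightly) reading the definition of ``disimplicial'' as ``$N^-(w)\to N^+(v)$ is a diclique'' (as in the abstract and the introduction), which on the $\SPLIT(D)$ side coincides with the $N(\cdot)$ formulation since $\OUT(v)$ is a source and $\IN(w)$ is a sink.
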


So, as anticipated, we can find out whether $D$ is transitive or not by computing the disimplicial arcs of $\SPLIT(D^*)$.  Since $\SPLIT(D^*)$ can be computed in linear time when $D$ is provided as input, we conclude that finding the disimplicial arcs of an ST graph is harder than testing if a digraph is transitive.

\begin{theorem}\label{thm:transitive is disimplicial}
  A digraph $D$ is transitive if and only if all the arcs in the matching\/ $\{\OUT(v) \to \IN(v) \mid v \in V(D^*)\}$ of\/ $\SPLIT(D^*)$ are disimplicial, where $D^*$ is the reflexive closure of $D$.  
\end{theorem}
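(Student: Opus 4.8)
The plan is to strip away the two transformations—the reflexive closure and $\SPLIT$—one at a time, reducing the claim to a statement about $D^*$ alone. Since $v\to v\in E(D^*)$ for every $v\in V(D^*)$, each vertex of $D^*$ is both a non-source and a non-sink, so the vertices $\OUT(v),\IN(v)$ of $\SPLIT(D^*)$ all exist and the arcs $\OUT(v)\to\IN(v)$ indeed form a matching. Applying Corollary~\ref{cor:SPLIT disimplicials} to the digraph $D^*$ with the arc $v\to v$, the arc $\OUT(v)\to\IN(v)$ is disimplicial in $\SPLIT(D^*)$ if and only if the loop $v\to v$ is disimplicial in $D^*$. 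Hence the right-hand side of the theorem is equivalent to: every loop of $D^*$ is disimplicial in $D^*$.

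Next, every vertex of $D^*$ is reflexive, so the observation opening this section (a reflexive vertex $v$ is transitive if and only if $v\to v$ is disimplicial) applies to each vertex of $D^*$ and shows that ``every loop of $D^*$ is disimplicial in $D^*$'' is the same as ``every vertex of $D^*$ is transitive'', i.e., ``$D^*$ is transitive''. So the whole statement reduces to the equivalence: $D$ is transitive if and only if $D^*$ is transitive.

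It remains to prove that last equivalence. Here I would use that $N^-_{D^*}(v)=N^-_D(v)\cup\{v\}$, $N^+_{D^*}(v)=N^+_D(v)\cup\{v\}$, and $E(D^*)=E(D)\cup\{(v,v):v\in V(D)\}$. The implication ``$D$ transitive $\Rightarrow$ $D^*$ transitive'' is a direct check: for $x\in N^-_{D^*}(v)$ and $y\in N^+_{D^*}(v)$, either $(x,y)\in N^-_D(v)\times N^+_D(v)$, in which case $x\to y\in E(D)$ by transitivity of $D$, or $x=v$ or $y=v$, in which case $x\to y$ is $v\to y\in E(D)$, $x\to v\in E(D)$, or the loop $v\to v\in E(D^*)$; in all cases $x\to y\in E(D^*)$. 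For the converse, given $x\in N^-_D(v)$ and $y\in N^+_D(v)$, transitivity of $D^*$ at $v$ yields $x\to y\in E(D^*)$, hence $x\to y\in E(D)$ unless $x=y$; if $x=y=v$ then $v\to v$ already lies in $E(D)$, and the case $x=y\neq v$ (a digon on $\{v,x\}$) is vacuous when $D$ has no digon on distinct vertices—in particular for oriented graphs, the setting relevant to the poset results of the paper. I expect this converse, and specifically the digon case, to be the only delicate point; everything else is a mechanical unwinding of the definitions of $\SPLIT$, of a disimplicial arc, and of the reflexive closure, together with Corollary~\ref{cor:SPLIT disimplicials}.
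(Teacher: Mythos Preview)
Your approach—reducing via Corollary~\ref{cor:SPLIT disimplicials} to the loops of $D^*$, then via the opening observation of the section to transitivity of $D^*$, and finally to transitivity of $D$—is exactly the informal argument the paper sketches before the statement (no separate proof is given there). Where you go beyond the paper is in actually checking the last equivalence, and your caution about the digon case is well placed: it is not merely delicate, it is a genuine counterexample to the theorem as stated. Take $D$ on vertices $\{a,b\}$ with arcs $a\to b$ and $b\to a$ and no loops. Then $D$ is not transitive (at $a$ one has $b\in N^-(a)\cap N^+(a)$ but $b\to b\notin E(D)$), yet $D^*$ is the complete reflexive digraph on two vertices, both loops $a\to a$ and $b\to b$ are disimplicial in $D^*$, and by Corollary~\ref{cor:SPLIT disimplicials} both arcs $\OUT(a)\to\IN(a)$ and $\OUT(b)\to\IN(b)$ are disimplicial in $\SPLIT(D^*)$. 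Thus the ``if'' direction fails for general digraphs.

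The equivalence does hold once $D$ has no antiparallel pair $x\leftrightarrows v$ with $x$ irreflexive—in particular whenever $D$ is oriented, which is the only setting in which the paper later invokes the theorem (in Section~\ref{sec:WDI and DI}, where $D$ is explicitly a reflexive oriented graph). Your forward direction and the non-digon cases of the converse are correct as written; the honest thing to do is exactly what you did, namely record the extra hypothesis under which the converse goes through.
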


For the rest of this section, we discuss how to find disimplicial arcs by computing transitive vertices.  The idea is to revert, as much as possible, the effects of $\SPLIT$.  For any matching $M$ of a digraph $G$, define $\JOIN(G,M)$ to be the digraph $D$ that has a vertex $(v,v)$ for each $v \in V(G) \setminus V(M)$, and a vertex $(v,w)$ for each $v \to w \in M$, where $(v,w) \to (x,y) \in E(D)$ if and only if $v \to y \in E(G)$ (see Figure~\ref{fig:split+join}).  The restricted duality between the $\SPLIT$ and $\JOIN$ operators is given in the next lemmas.

\begin{lemma}\label{lem:bip then join}
  If $D$ is a reflexive digraph, then $D$ is isomorphic to\/ $\JOIN(\SPLIT(D), \{\OUT(v) \to \IN(v) \mid v \in V(D)\})$.
\end{lemma}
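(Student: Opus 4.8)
The plan is to exhibit an explicit isomorphism $\varphi$ between $D$ and $D' := \JOIN(\SPLIT(D), M)$, where $M = \{\OUT(v) \to \IN(v) \mid v \in V(D)\}$, and then check that it preserves and reflects arcs. First I would observe that $M$ is indeed a matching of $G := \SPLIT(D)$: since $D$ is reflexive, every vertex $v$ is both a non-source and a non-sink, so both $\OUT(v)$ and $\IN(v)$ exist in $G$, and $\OUT(v) \to \IN(v) \in E(G)$ because $v \to v \in E(D)$; distinct $v$ give vertex-disjoint arcs because $\OUT$ and $\IN$ are injective with disjoint ranges. Moreover $V(M) = V(G)$, since reflexivity of $D$ means every vertex of $G$ is of the form $\OUT(v)$ or $\IN(v)$. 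Hence, by the definition of $\JOIN$, the vertex set of $D'$ is exactly $\{(\OUT(v), \IN(v)) \mid v \in V(D)\}$, with no ``diagonal'' vertices $(u,u)$ arising from $V(G) \setminus V(M) = \emptyset$.

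Next I would define $\varphi \colon V(D) \to V(D')$ by $\varphi(v) = (\OUT(v), \IN(v))$. This is a bijection by the previous paragraph. It remains to check that $v \to w \in E(D)$ if and only if $\varphi(v) \to \varphi(w) \in E(D')$. By the definition of $\JOIN$, the arc $(\OUT(v), \IN(v)) \to (\OUT(w), \IN(w))$ lies in $E(D')$ exactly when $\OUT(v) \to \IN(w) \in E(G) = E(\SPLIT(D))$, which by the definition of $\SPLIT$ holds exactly when $v \to w \in E(D)$. Chaining these equivalences gives the desired biconditional, so $\varphi$ is an isomorphism.

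The argument is essentially bookkeeping, so there is no serious obstacle; the one place that needs care is making sure the reflexivity hypothesis is used in the right spot. It is needed twice: once to guarantee that $\OUT(v)$ and $\IN(v)$ are both defined for every $v$ (so that $\varphi$ has the claimed domain and $M$ is a well-defined perfect matching of $G$), and once to guarantee $V(G) \setminus V(M) = \emptyset$, so that $\JOIN$ does not introduce spurious diagonal vertices that would have no preimage under $\varphi$. If $D$ were not reflexive, a source or sink vertex would be lost in $\SPLIT$ and the round trip would fail, which is exactly why the lemma is stated for reflexive digraphs. I would also remark in passing that $\varphi$ sends $v \to v$ to the reflexive vertex $(\OUT(v), \IN(v))$ of $D'$, consistent with the note in Figure~\ref{fig:split+join} that $(v,w)$ is reflexive in $\JOIN(G,M)$ iff $v \to w \in M$.
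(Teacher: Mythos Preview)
Your proof is correct and follows essentially the same approach as the paper: both define the explicit bijection $v \mapsto (\OUT(v),\IN(v))$ and verify it is an isomorphism by chaining the defining equivalences of $\SPLIT$ and $\JOIN$. Your version is in fact a bit more careful, since you explicitly argue that $M$ is a \emph{perfect} matching of $\SPLIT(D)$ so that no diagonal vertices $(u,u)$ appear in $\JOIN$, a point the paper's proof leaves implicit.
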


\begin{proof}
  Note that $M = \{\OUT(v) \to \IN(v) \mid v \in V(D)\}$ is a matching of $D$ because $D$ is reflexive, hence $H = \JOIN(G, M)$ is well defined for $G = \SPLIT(D)$.  Let $f\colon V(D) \to V(H)$ be the function such that $f(v) = (\IN(v), \OUT(v))$ (see Figure~\ref{fig:bip then join}).   By definition of $\SPLIT$, $v \to w \in E(D)$ if and only if $\OUT(v) \to \IN(w) \in E(G)$, for every $v,w \in V(D)$.  Similarly, by the definition of $\JOIN$, $\OUT(v) \to \IN(w) \in E(G)$ if and only if $(\OUT(v), \IN(v)) \to (\OUT(w), \IN(w)) \in E(H)$.  That is, $v \to w \in E(D)$ if and only if $f(v) \to f(w) \in E(H)$.    
\end{proof}

\begin{figure}
 \centering
 \includegraphics{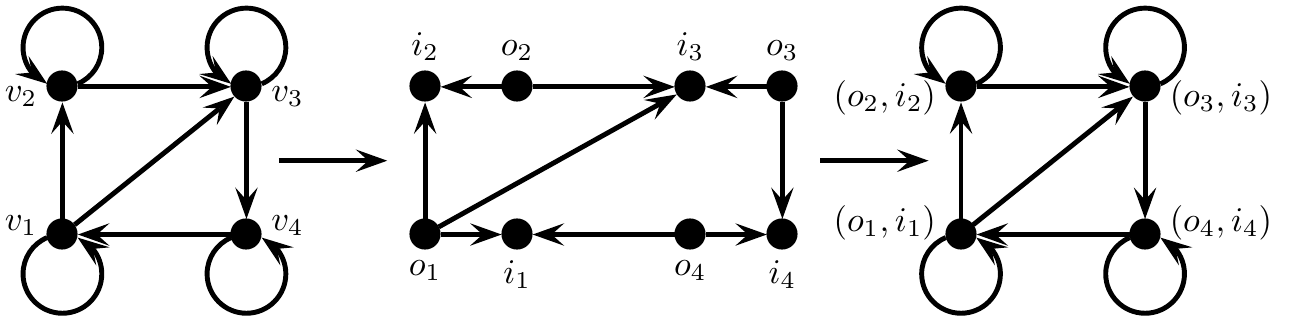}
 \caption{From left to right: $D$, $G = \SPLIT(D)$, and $H = \JOIN(G, M)$ for $M = \{\OUT(V) \to \IN(v) \mid v \in V(D)\}$.   Again, we write $i_x$ and $o_x$ to denote the vertices $\OUT(v_x)$ and $\IN(v_x)$ of $G$, respectively. Note that the function $f$ of Lemma~\ref{lem:bip then join} is an isomorphism between $D$ and $H$.}\label{fig:bip then join}
\end{figure}

\begin{lemma}\label{lem:join then bip}
  If $M$ is a perfect matching of an ST graph $G$, then $G$ is isomorphic to\/ $\SPLIT(\penalty0\JOIN(\penalty0G, M))$.
\end{lemma}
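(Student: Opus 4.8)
The plan is to mimic the proof of Lemma~\ref{lem:bip then join}, only now starting from an ST graph $G$ with a perfect matching $M$ and building back the split of its join. First I would set $D = \JOIN(G, M)$; this is well defined since $M$ is a matching, and its vertex set consists of the pairs $(v,w)$ for $v \to w \in M$ (there are no ``diagonal'' vertices $(v,v)$ because $M$ is perfect, so $V(G) \setminus V(M) = \emptyset$). Then I would form $G' = \SPLIT(D)$ and exhibit an explicit isomorphism $g \colon V(G) \to V(G')$. The natural candidate is: for $v \to w \in M$, send the tail $v$ to $\OUT((v,w))$ and the head $w$ to $\IN((v,w))$. Because $G$ is an ST graph and $M$ is perfect, every vertex of $G$ is either the tail or the head of exactly one arc of $M$, so $g$ is a well-defined bijection onto $V(G')$ — provided $\OUT$ and $\IN$ are not collapsing any of these pairs, which I check next.

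The core of the argument is a chain of ``if and only if'' equivalences tracing an arc of $G$ through the two operators, exactly parallel to Lemma~\ref{lem:bip then join}. Take an arc $s \to t \in E(G)$; since $G$ is ST, $s$ is a source and $t$ is a sink, hence $s$ is the tail of its matching arc, say $s \to s' \in M$, and $t$ is the head of its matching arc, say $t' \to t \in M$. By the definition of $\JOIN$, $(s, s') \to (t', t) \in E(D)$ if and only if $s \to t \in E(G)$. By the definition of $\SPLIT$, $(s,s')$ is a non-sink of $D$ and $(t',t)$ is a non-source of $D$ precisely when such arcs exist, and $\OUT((s,s')) \to \IN((t',t)) \in E(G')$ if and only if $(s,s') \to (t',t) \in E(D)$. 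Composing, $s \to t \in E(G)$ if and only if $g(s) \to g(t) \in E(G')$, which is what we need. I would also note the boundary bookkeeping: a source $s$ of $G$ that has at least one out-neighbor gives a non-sink $(s,s')$ in $D$, so $\OUT((s,s'))$ exists in $G'$; and symmetrically for sinks, so $g$ is onto. (Isolated vertices of $G$, if any, are matched to some partner and still produce a vertex $(v,w)$ in $D$; one should observe that $\SPLIT$ of $D$ then recreates the isolated vertex on the appropriate side — this is the only slightly fussy case.)

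The main obstacle I anticipate is not conceptual but notational: keeping straight which coordinate of the pair $(v,w)$ corresponds to the $\OUT$ side and which to the $\IN$ side, and verifying that the ``non-source/non-sink'' qualifiers in the definition of $\SPLIT$ line up correctly with the source/sink structure of $G$ inherited by $D$. In particular one must confirm that a vertex $(v,w)$ of $D = \JOIN(G,M)$ is a non-sink exactly when $v$ (the tail of the matching arc, a source of $G$) has positive out-degree in $G$, and is a non-source exactly when $w$ has positive in-degree in $G$; this is immediate from the definition of $\JOIN$ but deserves an explicit sentence. Once those identifications are pinned down, the isomorphism $g$ and the arc-equivalence chain finish the proof, and the argument is essentially the dual of Lemma~\ref{lem:bip then join} with $\SPLIT$ and $\JOIN$ swapped.
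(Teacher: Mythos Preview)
Your approach is correct and is exactly the one the paper takes: it too declares the proof ``analogous to Lemma~\ref{lem:bip then join}'' and writes down the same map, sending a source $v$ with matching arc $v \to m(v)$ to $\OUT((v,m(v)))$ and a sink $w$ with matching arc $m(w) \to w$ to $\IN((m(w),w))$.

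One small simplification for the bookkeeping you flag as ``fussy'': since $M$ is a perfect matching, $G$ has no isolated vertices at all, and in fact every vertex $(v,w)$ of $D=\JOIN(G,M)$ is reflexive, because $v\to w\in M\subseteq E(G)$ forces $(v,w)\to(v,w)\in E(D)$. Hence every vertex of $D$ is simultaneously a non-source and a non-sink, so both $\OUT((v,w))$ and $\IN((v,w))$ always exist in $\SPLIT(D)$; the boundary case you worry about never arises.
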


\begin{proof}
  The proof is analogous to that of Lemma~\ref{lem:bip then join}.  This time, take $H = \SPLIT(\JOIN(G, M))$ and $m(v)$ be the neighbor of $v$ in $M$, and observe that $f\colon V(G) \to V(H)$ is an isomorphism when $f(v) = \IN((v, m(v)))$ for every sink vertex $v$ and $f(v) = \OUT((m(v), v))$ for every source vertex $v$.
\end{proof}

Despite Lemma~\ref{lem:join then bip} requires an ST graph $G$ with a perfect matching $M$, the $\JOIN$ operator can be applied to any digraph and any matching.  The final result is always the same, though; the disimplicial arcs of $M$ get transformed into transitive vertices.

\begin{theorem}\label{thm:disimplicial is transitive}
  Let $M$ be a matching of a digraph $G$, and $v \to w \in M$.  Then, $v \to w$ is disimplicial in $G$ if and only if $(v,w)$ is a transitive vertex of\/ $\JOIN(G, M)$.
\end{theorem}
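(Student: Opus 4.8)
Write $D=\JOIN(G,M)$; recall that a vertex of $D$ is a pair $(x,y)$ with $x\to y\in M$, or with $x=y\notin V(M)$, and that $(x,y)\to(x',y')\in E(D)$ if and only if $x\to y'\in E(G)$. The plan is to unfold this definition on the vertex $(v,w)$, in the same spirit as the proofs of Lemmas~\ref{lem:bip then join} and~\ref{lem:join then bip}, so that the defining condition ``$v\to w$ is disimplicial in $G$'' becomes, term by term, the defining condition ``$(v,w)$ is transitive in $D$''.

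The first step is to read off the two neighborhoods of $(v,w)$ in $D$. By the arc rule of $\JOIN(G,M)$, a vertex $(x,y)$ of $D$ is an out-neighbor of $(v,w)$ exactly when $v\to y\in E(G)$, and an in-neighbor exactly when $x\to w\in E(G)$. Hence $N^+_D((v,w))$ consists of the vertices of $D$ whose second coordinate lies in $N^+_G(v)$, and $N^-_D((v,w))$ of those whose first coordinate lies in $N^-_G(w)$. Consequently $(v,w)$ is transitive in $D$ precisely when $x\to y'\in E(G)$ for every vertex $(x,y)$ of $D$ with $x\in N^-_G(w)$ and every vertex $(x',y')$ of $D$ with $y'\in N^+_G(v)$. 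By definition, $v\to w$ is disimplicial in $G$ precisely when $N^-_G(w)\to N^+_G(v)$ is a diclique, that is, when $a\to b\in E(G)$ for all $a\in N^-_G(w)$ and all $b\in N^+_G(v)$. So everything reduces to matching these two universally quantified statements.

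One direction is immediate. If $v\to w$ is disimplicial, $(x,y)\in N^-_D((v,w))$ and $(x',y')\in N^+_D((v,w))$, then $x\in N^-_G(w)$ and $y'\in N^+_G(v)$, so $x\to y'\in E(G)$ and thus $(x,y)\to(x',y')\in E(D)$; hence $(v,w)$ is transitive. For the converse, fix $a\in N^-_G(w)$ and $b\in N^+_G(v)$. The vertex of $D$ that carries $a$ in its first coordinate --- namely $(a,a)$ if $a\notin V(M)$, and $(a,m(a))$ if $a$ leaves the arc $a\to m(a)$ of $M$ --- is an in-neighbor of $(v,w)$ because $a\to w\in E(G)$; symmetrically, the vertex of $D$ that carries $b$ in its second coordinate is an out-neighbor of $(v,w)$ because $v\to b\in E(G)$. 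Transitivity of $(v,w)$ applied to this pair then gives $a\to b\in E(G)$, so $N^-_G(w)\to N^+_G(v)$ is a diclique and $v\to w$ is disimplicial.

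The step that needs the most care is the production of those two witness vertices in the converse: one must know that every $a\in N^-_G(w)$ actually occurs as a \emph{first} coordinate in $D$ (equivalently, that $a$ is not merely the head of an $M$-arc) and every $b\in N^+_G(v)$ as a \emph{second} coordinate (not merely the tail of one). This is exactly where the structure of $G$ enters: $a\to w$ and $v\to b$ being arcs of $G$ forces $a$ to be a non-sink and $b$ a non-source, and in the source-sink setting in which $\JOIN$ is applied this makes $a$ a source and $b$ a sink; since a source enters no arc and a sink leaves none, $a$ can only be a tail (or unmatched) and $b$ only a head (or unmatched), so the two witnesses indeed exist. The remainder is the same quantifier bookkeeping already carried out in Lemma~\ref{lem:bip then join}.
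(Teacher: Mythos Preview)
Your overall strategy is the paper's: unfold the arc rule of $\JOIN$, identify $N^{\pm}_D((v,w))$ with $N^-_G(w)$ and $N^+_G(v)$, and match the two quantified statements. You are in fact more careful than the paper in one place: you isolate, in the converse direction, the need to produce for each $a\in N^-_G(w)$ a vertex of $D$ whose \emph{first} coordinate is $a$, and for each $b\in N^+_G(v)$ one whose \emph{second} coordinate is $b$. The paper's one-line proof passes over this silently.

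The problem is your justification of that step. The theorem is stated for an arbitrary digraph $G$ and an arbitrary matching $M$, but your last paragraph appeals to ``the source-sink setting in which $\JOIN$ is applied'' to force $a$ to be a source and $b$ a sink. Without that extra hypothesis the witnesses need not exist, and the implication ``$(v,w)$ transitive $\Rightarrow$ $v\to w$ disimplicial'' actually fails: take $V(G)=\{v,w,a,b\}$, $E(G)=\{v\to w,\ a\to w,\ v\to b,\ b\to a\}$ and $M=\{v\to w,\ b\to a\}$. Then $\JOIN(G,M)$ has only the two vertices $(v,w)$ and $(b,a)$, each carrying just a loop, so $(v,w)$ is (vacuously) transitive; yet $a\in N^-_G(w)$, $b\in N^+_G(v)$ and $a\to b\notin E(G)$, so $v\to w$ is not disimplicial. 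Here $a$ occurs only as a head in $M$ and $b$ only as a tail, exactly the obstruction you anticipated. So the gap you flagged is genuine and cannot be closed in the stated generality; your argument (and the paper's) is a correct proof once one assumes $G$ is an ST graph, or more generally that no vertex of $N^-_G(w)$ is solely a head in $M$ and no vertex of $N^+_G(v)$ is solely a tail (e.g.\ when $M=\{v\to w\}$). Those are exactly the cases in which the theorem is later used, so nothing downstream is affected, but as a proof of the statement \emph{as written} the converse is incomplete.
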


\begin{proof}
  Let $D = \JOIN(G, M)$ and observe that $(v,w) \in V(D)$.  By definition, $(a,b) \to (x,y) \in E(D)$ if and only if $a \to y \in E(G)$, for every $a,b,x,y \in V(G)$.  Then, $(a,b) \to (x,y) \in E(D)$ for every pair $(a,b), (x,y) \in V(D)$ such that $(a, b) \to (v,w) \in E(D)$ and $(v,w) \to (x,y) \in E(D)$ if and only if $a \to y \in E(G)$ for every pair $a, y \in V(G)$ such that $a \to w \in E(G)$ and $v \to y \in E(G)$.  That is, $(v,w)$ is transitive in $D$ if and only if $v \to w$ is disimplicial in $G$.
\end{proof}

Theorem~\ref{thm:disimplicial is transitive} gives us a method for testing if an arc $v \to w$ is disimplicial: check if $(v,w)$ is transitive in $D = \JOIN(G, \{v \to w\})$.  Since $D$ can be computed in $O(d_G(v) + d_G(w))$ time when $G$ and $v \to w$ are given as input, we conclude that querying if an arc is disimplicial is equally hard as determining if a vertex is transitive.  We remark that testing if $(v,w) \in V(D)$ is transitive and checking if $v \to w \in E(G)$ is disimplicial are both solvable in $O(m)$ time.

Theorem~\ref{thm:disimplicial is transitive} can also be used to find all the disimplicial arcs of $G$ when an adequate matching is provided.  For the sake of simplicity, we restrict ourselves to ST graphs, by Proposition~\ref{prop:SPLIT dicliques}.  Moreover, we find it convenient to eliminate twin vertices.  Two vertices $v,w$ of an ST graph $G$ are \emph{twins} when $N(v) = N(w)$, while $G$ is \emph{twin-free} when it contains no pair of twins.  A \emph{twin block} is a maximal set of twin vertices; note that $V(G)$ admits a unique partition into twin blocks.  We assume the existence of a function $\REP_G$ that, given a block $B$, returns a vertex of $B$, and we write $\REP_G(v) = \REP_G(B)$ for every $v \in B$.  For the sake of notation, we omit the subscript $G$ from $\REP$ when no ambiguities arise.  The \emph{twin reduction} of $G$ is the subdigraph $\REPG(G)$ of $G$ induced by $\{\REP(B) \mid B \text{ is a block of } G\}$.  The twin reduction of $G$ contains all the information about the disimplicial arcs of $G$, as in the next proposition.

\begin{proposition}\label{prop:twin reduction}
  An arc $v \to w$ of an ST graph $G$ is disimplicial if and only if\/ $\REP(v) \to \REP(w)$ is disimplicial in $\REPG(G)$. 
\end{proposition}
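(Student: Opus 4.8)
The plan is to deduce both directions from a single structural fact about twins in an ST graph. Write $r=\REP$ and $G'=\REPG(G)$, and recall that $G'$ is an induced subdigraph of $G$. First I would record the following: if $a\to b\in E(G)$ then $a$ is a source and $b$ is a sink (because $G$ is an ST graph), every vertex in the twin block of $a$ is again a source with out-neighbourhood $N_G^+(a)$, every vertex in the twin block of $b$ is again a sink with in-neighbourhood $N_G^-(b)$, and consequently $a'\to b'\in E(G)$ for every $a'$ in the block of $a$ and every $b'$ in the block of $b$. This is immediate from the definitions once one observes that a source $s$ satisfies $N_G(s)=N_G^+(s)$ and a sink $s$ satisfies $N_G(s)=N_G^-(s)$, so that the undirected equality $N_G(s)=N_G(s')$ for twins $s,s'$ upgrades to the corresponding directed equality. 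Applied in both directions, this fact shows that $v\to w\in E(G)$ if and only if $r(v)\to r(w)\in E(G)$, and the latter is equivalent to $r(v)\to r(w)\in E(G')$ because $G'$ is induced; so both sides of the stated equivalence genuinely concern arcs. Moreover, when $v\to w\in E(G)$ we get $N_G^-(r(w))=N_G^-(w)$ and $N_G^+(r(v))=N_G^+(v)$, because $w$ is a sink, $v$ is a source, and $r(w)$, $r(v)$ are their twins.

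For the forward implication, assume $v\to w$ is disimplicial in $G$, that is, $N_G^-(w)\to N_G^+(v)$ is a diclique of $G$. Pick any $x\in N_{G'}^-(r(w))$ and $y\in N_{G'}^+(r(v))$; these relations are arcs of $G$ as well, so $x\in N_G^-(r(w))=N_G^-(w)$ and $y\in N_G^+(r(v))=N_G^+(v)$, and therefore $x\to y\in E(G)$. Since $x,y\in V(G')$ and $G'$ is induced, $x\to y\in E(G')$. Hence $N_{G'}^-(r(w))\to N_{G'}^+(r(v))$ is a diclique of $G'$, i.e.\ $r(v)\to r(w)$ is disimplicial in $G'$.

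For the converse, assume $r(v)\to r(w)$ is disimplicial in $G'$; by the first paragraph $v\to w\in E(G)$. Take arbitrary $x\in N_G^-(w)$ and $y\in N_G^+(v)$; then $x$ is a source and $y$ is a sink, and the structural fact turns $x\to w$ and $v\to y$ into arcs $r(x)\to r(w)$ and $r(v)\to r(y)$ of $G$, hence of $G'$. Thus $r(x)\in N_{G'}^-(r(w))$ and $r(y)\in N_{G'}^+(r(v))$, so disimpliciality of $r(v)\to r(w)$ in $G'$ yields $r(x)\to r(y)\in E(G)$. Unwinding the twin relations once more --- $N_G^+(r(x))=N_G^+(x)$ since $x$ is a source, and $N_G^-(r(y))=N_G^-(y)$ since $y$ is a sink --- we obtain first $x\to r(y)\in E(G)$ and then $x\to y\in E(G)$. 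So $N_G^-(w)\to N_G^+(v)$ is a diclique of $G$ and $v\to w$ is disimplicial in $G$.

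The only step that needs real care --- indeed the heart of the matter --- is the bookkeeping of which neighbourhoods survive the replacement of a vertex by its twin-block representative: one must use that in an ST graph the twins of a source are themselves sources and the twins of a sink are themselves sinks, so that all the substitutions above stay inside $E(G)$ and the directed neighbourhood equalities hold. Once that is in place, each implication is a short chase through the definitions.
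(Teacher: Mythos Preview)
Your argument is correct. The paper states this proposition without proof, treating it as an easy observation about twins; your write-up supplies exactly the routine verification the authors omitted, and the key point you isolate --- that in an ST graph the twin of a source is a source and the twin of a sink is a sink, so that the directed neighbourhoods $N^+$ and $N^-$ are preserved under $\REP$ --- is indeed what makes the definition-chase go through.
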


We are now ready to state what an adequate matching looks like.  For each $v \in V(G)$, define the \emph{thin neighbor} $\TN(v)$ of $v$ to be the (unique) vertex $w \in N(v)$ such that $d(w) < d(z)$ for every $z \in N(v) \setminus \{w\}$; if such a vertex does not exist, then $\TN(v)$ is some undefined vertex.  Say that $v \to w$ is a \emph{thin} arc when $v = \TN(w)$ and $w = \TN(v)$.  For the sake of notation, we write $\JOIN(G)$ to denote $\JOIN(G, M)$ where $M$ is the set of thin arcs of $G$; note that $\JOIN(G)$ is well defined because $M$ is a matching. The following easy-to-prove lemma is as fundamental for us as it is for the algorithm in~\cite{BomhoffMantheyDAM2013}.

\begin{lemma}[see e.g.~\cite{BomhoffMantheyDAM2013}]\label{lem:disimplicial is thin}
  All the disimplicial arcs of a twin-free ST graph are thin. 
\end{lemma}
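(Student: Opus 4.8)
The plan is to fix a disimplicial arc $v \to w$ of a twin-free ST graph $G$ and verify directly that $w = \TN(v)$ and $v = \TN(w)$, which is exactly what it means for $v \to w$ to be thin. First I would pin down the local structure: since $G$ is an ST graph and $v$ has an out-neighbor (namely $w$) while $w$ has an in-neighbor (namely $v$), the vertex $v$ cannot be a sink and $w$ cannot be a source, so $v$ is a source and $w$ is a sink (in particular $v \neq w$, as ST graphs have no loops). Hence $N(v) = N^+(v)$ and $N(w) = N^-(w)$, and the disimpliciality of $v\to w$ says precisely that $x \to y \in E(G)$ for every $x \in N^-(w)$ and every $y \in N^+(v)$.

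The core of the argument is a neighborhood-containment observation. Let $z \in N^+(v) \setminus \{w\}$. For every $x \in N^-(w)$ we have $x \to z$ (because $z \in N^+(v)$ and $v \to w$ is disimplicial), so $N^-(w) \subseteq N^-(z)$ and therefore $d(w) \le d(z)$. Symmetrically, for any $x \in N^-(w) \setminus \{v\}$ and every $y \in N^+(v)$ we get $x \to y$, so $N^+(v) \subseteq N^+(x)$ and $d(v) \le d(x)$.

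Next I would upgrade these inequalities to strict ones using twin-freeness. If $d(w) = d(z)$ for some $z \in N^+(v)\setminus\{w\}$, then $N^-(w) = N^-(z)$; since $w$ and $z$ are both sinks this forces $N(w) = N(z)$, i.e.\ $w$ and $z$ are twins, contradicting the hypothesis. So $d(w) < d(z)$ for every $z \in N(v) \setminus \{w\}$, which is exactly the statement that $\TN(v)$ is well defined and equals $w$ (vacuously true if $N(v) = \{w\}$). Running the identical argument with sources in place of sinks yields $\TN(w) = v$, so $v \to w$ is thin, as desired.

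I do not expect a real obstacle here; the argument is short and essentially forced. The only points needing care are keeping the reasoning symmetric in the two endpoints and correctly discharging the ``undefined $\TN$'' clause — one must check that $w$ strictly beats \emph{every} other neighbor of $v$ in degree, not merely that $w$ attains the minimum. The twin-free hypothesis is precisely the ingredient that turns each $\le$ into a $<$.
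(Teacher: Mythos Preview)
Your proof is correct. Note that the paper does not actually supply a proof of this lemma: it is stated with a citation to~\cite{BomhoffMantheyDAM2013} and described as ``easy-to-prove,'' so there is nothing to compare against beyond confirming that your argument is the expected one --- and it is. The neighborhood-inclusion step $N^-(w) \subseteq N^-(z)$ forced by disimpliciality, followed by the twin-free hypothesis to make the degree inequality strict, is exactly the standard route.
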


The algorithm to compute the disimplicial arcs of an ST graph works in two phases.  In the first phase, all the disimplicial arcs of $H = \REPG(G)$ are obtained by querying which of the vertices of $\JOIN(H)$ are transitive.  In the second phase, each $v \to w \in E(G)$ is tested to be disimplicial by querying if $\REP(v) \to \REP(w)$ is disimplicial in $H$.   The algorithm is correct by Theorem~\ref{thm:disimplicial is transitive}, Proposition~\ref{prop:twin reduction}, and Lemma~\ref{lem:disimplicial is thin}.  

\begin{theorem}\label{thm:disimplicial algorithm}
  An arc $v \to w$ of a digraph $G$ is disimplicial if and only if\/ $(\REP(\OUT(v)), \REP(\IN(w)))$ is transitive in\/ $\JOIN(\REPG(\SPLIT(G)))$.
\end{theorem}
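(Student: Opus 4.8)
The plan is to chain the equivalences already established in this section so that disimplicality is transported from $G$ down to a transitive‑vertex condition. Write $G' = \SPLIT(G)$ and $H = \REPG(G')$, and let $M$ be the set of thin arcs of $H$, so that $\JOIN(H) = \JOIN(H, M)$ by our notational convention. Since $v \to w$ is an arc of $G$, the vertex $v$ is a non‑sink and $w$ a non‑source, hence $\OUT(v)$ and $\IN(w)$ are genuine vertices of $G'$ and $\REP(\OUT(v))$, $\REP(\IN(w))$ are defined. The spine of the proof is: $v \to w$ is disimplicial in $G$ iff $\OUT(v) \to \IN(w)$ is disimplicial in $G'$ (Corollary~\ref{cor:SPLIT disimplicials}) iff $\REP(\OUT(v)) \to \REP(\IN(w))$ is disimplicial in $H$ (Proposition~\ref{prop:twin reduction}); recall here that $\OUT(v) \to \IN(w)$ is indeed an arc of $G'$ because $v \to w \in E(G)$. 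What remains is to show that this last condition is equivalent to ``$(\REP(\OUT(v)), \REP(\IN(w)))$ is transitive in $\JOIN(H)$''.

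For that last equivalence I would first record two structural facts about $H$: it is an ST graph, because an induced subdigraph of an ST graph is an ST graph (deleting vertices creates no in‑neighbor of a source and no out‑neighbor of a sink), and it is twin‑free by the very definition of the twin reduction. Hence Lemma~\ref{lem:disimplicial is thin} applies to $H$. In the forward direction, if $\REP(\OUT(v)) \to \REP(\IN(w))$ is disimplicial in $H$, then by Lemma~\ref{lem:disimplicial is thin} it is a thin arc, so it lies in $M$, and Theorem~\ref{thm:disimplicial is transitive} (with $H$ in the role of $G$ and $M$ the thin‑arc matching) yields that $(\REP(\OUT(v)), \REP(\IN(w)))$ is transitive in $\JOIN(H)$.

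The reverse direction needs the converse bridge: knowing that $(\REP(\OUT(v)), \REP(\IN(w)))$ is transitive in $\JOIN(H)$ must force $\REP(\OUT(v)) \to \REP(\IN(w)) \in M$, since that is the hypothesis required by Theorem~\ref{thm:disimplicial is transitive}. Being transitive presupposes that $(\REP(\OUT(v)), \REP(\IN(w)))$ is a vertex of $\JOIN(H)$, and by the definition of $\JOIN$ every vertex is either $(a,a)$ with $a \notin V(M)$ or $(a,b)$ with $a \to b \in M$. The first possibility is excluded because $\REP(\OUT(v)) \neq \REP(\IN(w))$: in $G'$ every $\OUT$‑vertex is a source with nonempty neighborhood (it has an out‑neighbor, as $v$ is a non‑sink), every $\IN$‑vertex is a sink with nonempty neighborhood, and the neighborhood of an $\OUT$‑vertex consists of $\IN$‑vertices while that of an $\IN$‑vertex consists of $\OUT$‑vertices, so the two neighborhoods are disjoint and nonempty, whence $\OUT(v)$ and $\IN(w)$ are not twins and fall in distinct twin blocks. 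Therefore $\REP(\OUT(v)) \to \REP(\IN(w)) \in M$, Theorem~\ref{thm:disimplicial is transitive} gives that this arc is disimplicial in $H$, and running the spine equivalences backwards (Proposition~\ref{prop:twin reduction}, then Corollary~\ref{cor:SPLIT disimplicials}) shows $v \to w$ is disimplicial in $G$.

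I expect the main obstacle to be purely bookkeeping rather than conceptual: one must check that every object named in the statement actually exists in the digraph in which it is invoked, and that the matching against which Theorem~\ref{thm:disimplicial is transitive} has to be read is exactly the thin‑arc matching of $H$ (not an arbitrary one). The one substantive observation beyond routine unwinding is that $\OUT$‑vertices and $\IN$‑vertices of $\SPLIT(G)$ are never twins, which is precisely what prevents $(\REP(\OUT(v)), \REP(\IN(w)))$ from being a diagonal vertex $(a,a)$ of $\JOIN(H)$ and thus makes the last equivalence go through in both directions.
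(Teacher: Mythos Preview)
Your proof is correct and follows the same approach as the paper, which simply states that the theorem holds ``by Theorem~\ref{thm:disimplicial is transitive}, Proposition~\ref{prop:twin reduction}, and Lemma~\ref{lem:disimplicial is thin}'' without spelling out the chain. You have supplied the bookkeeping the paper omits, including the one nontrivial detail that $\OUT$-vertices and $\IN$-vertices of $\SPLIT(G)$ can never be twins (so $\REP(\OUT(v)) \neq \REP(\IN(w))$ and the vertex in $\JOIN(H)$ must come from a thin arc), which is exactly what is needed for the reverse direction to invoke Theorem~\ref{thm:disimplicial is transitive}.
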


Since $\SPLIT$, $\JOIN$, and $\REPG$ can be computed in linear time, we conclude that listing the disimplicial arcs and finding the transitive vertices are equally hard problems. Up to these date, the best algorithms for computing the transitive vertices of $D = \JOIN(H)$ take $O(\alpha_D m_D)$ time and $O(m_D)$ space or $O(n_D^\omega)$ time and $O(n_D^2)$ space. Since $\alpha_D = O(\alpha_G)$, $n_D = O(n_G)$, and $m_D = O(m_G)$, we conclude that the disimplicial arcs of a digraph $G$ can be obtained in either $O(\alpha_G m_G)$ time and $O(m_G)$ space or $O(n_G^\omega)$ time and $O(n_G^2)$ space.

\section{Disimplicial eliminations}
\label{sec:disimplicial elimination}

The present section is devoted to the problems of finding disimplicial elimination sequences.  Before doing so, we review the $h$-digraph structure as it is required by our algorithms.

The \emph{$h$-graph structure} was introduced in~\cite{LinSoulignacSzwarcfiterTCS2012} with dynamic algorithms in mind.  It proved to be well suited for some vertex elimination problems, particularly those in which the conditions for removing a vertex are local to its neighborhood.  The \emph{$h$-digraph structure} is the cousin of $h$-graphs for digraphs, and it was superficially described in~\cite{LinSoulignacSzwarcfiterTCS2012}.  Let $D$ be a digraph and $\{\bullet, \circ\} = \{+,-\}$.  In short, the $h$-digraph structure maintains $3$ values for $\bullet$ and each $v \in V(D)$, namely $d^\bullet(v)$, $\mathcal{N}^\bullet(v)$, and $H^\bullet(v)$, where $\mathcal{N}^\bullet$ is an ordered list of the nonempty sets $N^\bullet(v, i)$ = $\{z \in N^\bullet(v) \mid d^\circ(z) = i\}$ with $i < d^\bullet(v)$.  Recall that $H^\bullet = \{z \in N^\bullet(v) \mid d^\circ(z) \geq d^\bullet(v)\}$.    The data structure also keeps track of several pointers that allow efficient access to the different incarnations of a vertex in the structure (see~\cite{LinSoulignacSzwarcfiterTCS2012}).  At all, no more than $O(m)$ bits are consumed.

Table~\ref{tab:h-operations} describes the operations supported by the $h$-digraph structure that are of interest for our purposes.  All of them, but {\tt MinN}, where described in~\cite{LinSoulignacSzwarcfiterTCS2012} for graphs, though their translation to digraphs is direct.  For the implementation of {\tt MinN}, two cases are considered to obtain the desired output $L$.  If $\mathcal{N}^\bullet = \emptyset$, then $d^\bullet(v) \leq d^\circ(w)$ for every $w \in N^\bullet(v)$, thus $L \subseteq H^\bullet(v)$; otherwise, $L$ is equal to the first set in $\mathcal{N}^\bullet(v)$.  The time required for this operation is, therefore, $O(h(v))$.  

\begin{table}
  \centering
  \begin{tabular}{llcc}
    \hline
    Operation & Description & \multicolumn{2}{c}{Complexity}\\
              &             & one & all \\\hline
    {\tt Initialize}($D$) & creates the $h$-graph structure of $D$ & - & $O(\alpha m)$\\
    {\tt Remove}($v, D$) & removes $v \in V(D)$ from $D$  & $O(dh)$ & $O(\alpha m)$ \\
    {\tt N'}($v, D, \bullet$) & returns $\{w \to z \in E(D) \mid w, z \in N^\bullet(v)\}$  & $O(dh)$ & $O(\alpha m)$ \\
    {\tt MinN}($v, D, \bullet$) & returns $\{w \in N^\bullet(v) \mid d^\circ(w) \leq d^\circ(z) \text{ for } z \in N^\bullet(v)\}$ & $O(h)$ & -\\
    {\tt d}($v, D, \bullet$) & returns $d^\bullet(v)$ & $O(1)$ & - \\
    \hline
  \end{tabular}
  \caption{Some operations supported by the $h$-digraph data structure.  The complexity column ``one'' indicates the time required by one invocation of the operation, while the complexity column ``all'' indicates the time required when the operation is applied $O(1)$ times to all the vertices in the digraph.  Here $h = h(v)$, $d = d(v)$, $\alpha = \alpha_D$ and $m = m_D$, $\bullet$ must belong to $\{+,-\}$, and $\circ$ is the opposite of $\bullet$.}\label{tab:h-operations}
\end{table}

\subsection{General disimplicial eliminations}

A sequence of arcs $S = v_1 \to w_1, \ldots, v_k \to w_k$ is a \emph{disimplicial elimination} of a digraph $G$ when $v_i \to w_i$ is disimplicial in $G_i = G \setminus \{v_1, w_1, \ldots, v_{i-1}, w_{i-1}\}$ for every $1 \leq i \leq k$; $S$ is \emph{maximal} when $G_{k+1}$ has no disimplicial arcs.  For convenience, we write $V(S)$ to denote the set of vertices of $S$.  

The algorithm to compute a maximal disimplicial elimination works in an iterative manner from an input digraph $G = G_1$.  At iteration $i$, the algorithm finds a disimplicial elimination $S_i$ of $G_{i}$ by taking any maximal matching of disimplicial arcs of $G_i$.  By maximal, we mean that either $v \in V(S_i)$ or $w \in V(S_i)$ for every disimplicial arc $v \to w$ of $G_i$.  Then, the algorithm updates $G_{i}$ into $G_{i+1} = G_{i} \setminus V(S_i)$ for the iteration $i+1$.  The algorithm stops with output $S = S_1, \ldots, S_{i-1}$ when $S_i = \emptyset$.  

For the sake of notation, in the rest of this section we write $P_i$ to denote each parameter $P$ on $G_i$ instead of using $P_{G_i}$; thus, we write $N_i(v)$ to denote $N_{G_i}(v)$, $\Delta_i$ to denote $\Delta_{G_i}$, and so on.  When no subscript is wrote, the parameter on $G$ should be understood; e.g., $N(v) = N_G(v)$, $\Delta = \Delta_G$, etc. 

The main idea of the algorithm is to compute $S_i$, for $i > 1$, by looking only at the arcs leaving or entering $V(S_{i-1})$.  Of all such arcs, we are interested in those with ``low degree'', which are the analogous of thin arcs for those digraphs that can contain twins (see Proposition~\ref{prop:disimplicial in L} below).   Let $V_{\OUT} = \{v \in V(G_i) \mid v \to y \text{ for } y \in V(S_{i-1})\}$ and $V_{\IN} = \{w \in V(G_i) \mid x \to w \text{ for } x \in V(S_{i-1})\}$, i.e., $V_{\OUT}$ and $V_{\IN}$ are the set of vertices of $G_i$ that have an out and in neighbor that was removed from $G_{i-1}$, respectively. For each $v \in V_{\OUT}$ (resp.\ $V_{\IN}$), let $L(v)$ be the set of out-neighbors (resp.\ in-neighbors) of $v$ with minimum in-degree (resp.\ out-degree) in $G_i$.  To compute $S_i$, the algorithm first initializes $S_i := \emptyset$ and then it traverses each vertex $v \in V_{\OUT} \cup V_{\IN}$.  For $v \in V_{\OUT}$ (resp.\ $v \in V_{\IN}$), the algorithm evaluates whether $v \to \ell$ (resp.\ $\ell \to v$) is disimplicial for any $\ell \in L(v)$.  If affirmative and $L(v) \setminus V(S_i) \neq\emptyset$, then $v \to w$ (resp.\ $w \to v$) is inserted into $S_i$ for any $w \in L(v) \setminus V(S_i)$.  (Note that $w$ needs not be equal to $\ell$; this happens when $x \to \ell$ or $\ell \to x$ was previously inserted into $S_i$ for some $x \in V(G_i)$.) If negative or $L(v) \subseteq V(S_i)$, then $v$ is ignored.  By invariant, $S_i$ is a matching of $G_i$.  Moreover $S_i$ contains only disimplicial arcs, as it follows from the following generalization of Lemma~\ref{lem:disimplicial is thin}.  

\begin{proposition}\label{prop:disimplicial in L}
  Let $v \in V_{\OUT} \cup V_{\IN}$ be an endpoint of some disimplicial arc of $G_i$.  Then, $v \to w$ (resp.\ $w \to v$) is disimplicial in $G_i$ if and only if $w \in L(v)$.
\end{proposition}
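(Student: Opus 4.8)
The plan rests on the reduction already in force: $G$—and hence every $G_i$—is an ST graph, so each of its vertices is a source or a sink. A vertex $v \in V_{\OUT}$ has an out-neighbour (the one removed at step $i-1$), hence $v$ is a source; dually, every vertex of $V_{\IN}$ is a sink. It therefore suffices to treat the $V_{\OUT}$ case, the $V_{\IN}$ case following by reversing every arc. Since $v$ is a source and each member of $N_i^+(v)$ has $v$ as an in-neighbour and is thus a sink, the condition that $v \to w$ be disimplicial in $G_i$, namely that $N_i(w) \to N_i(v)$ be a diclique, simplifies to: $N_i^-(w) \to N_i^+(v)$ is a diclique. Moreover, the disimplicial arc witnessing the hypothesis on $v$ has the form $v \to w^*$ with $w^* \in N_i^+(v)$.

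First I would establish the direction ``$v \to w$ disimplicial $\Rightarrow w \in L(v)$'', which uses nothing about $w^*$: if $N_i^-(w) \to N_i^+(v)$ is a diclique then for every $z \in N_i^+(v)$ each $u \in N_i^-(w)$ satisfies $u \to z \in E(G_i)$, so $N_i^-(w) \subseteq N_i^-(z)$ and $d_i^-(w) \le d_i^-(z)$; hence $w$ attains the minimum in-degree over $N_i^+(v)$, i.e. $w \in L(v)$. This is the twin-tolerant form of Lemma~\ref{lem:disimplicial is thin}.

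For the converse I would bring in $w^*$. Applying the previous paragraph to the disimplicial arc $v \to w^*$ gives $d_i^-(w^*) = \min_{z \in N_i^+(v)} d_i^-(z)$. Now take any $w \in L(v)$, so $w \in N_i^+(v)$ and $d_i^-(w) = d_i^-(w^*)$. Because $w \in N_i^+(v)$ and $N_i^-(w^*) \to N_i^+(v)$ is a diclique, every $u \in N_i^-(w^*)$ has $u \to w \in E(G_i)$, so $N_i^-(w^*) \subseteq N_i^-(w)$; combined with the equality of in-degrees this forces $N_i^-(w) = N_i^-(w^*)$. Consequently $N_i^-(w) \to N_i^+(v)$ coincides with the diclique $N_i^-(w^*) \to N_i^+(v)$, so it is itself a diclique, and $v \to w$ is disimplicial in $G_i$.

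The step demanding the most care is the converse: the hypothesis $w \in L(v)$ is purely numerical (it only says $w$ minimises an in-degree among $N_i^+(v)$), and it must be upgraded to the structural fact that $w$ and $w^*$ are twins in $G_i$ before the diclique attached to $v \to w^*$ can be transported to $v \to w$. That upgrade is precisely where the ST structure—which annihilates the ``wrong-side'' neighbourhoods—and the diclique furnished by the disimplicial arc $v \to w^*$ are both needed; it is the same phenomenon underlying Proposition~\ref{prop:twin reduction}. Everything else is a routine unwinding of the definition of a disimplicial arc.
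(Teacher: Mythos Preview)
Your argument is correct and is exactly the computation the paper has in mind when it calls this proposition a ``generalization of Lemma~\ref{lem:disimplicial is thin}'' and omits the proof. The forward step (disimplicial $\Rightarrow$ minimal in-degree via $N_i^-(w)\subseteq N_i^-(z)$) and the backward step (equal minimal in-degree plus the inclusion $N_i^-(w^*)\subseteq N_i^-(w)$ forces $N_i^-(w)=N_i^-(w^*)$, so the diclique transports) are precisely the intended ones.

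One remark on your framing: Section~\ref{sec:disimplicial elimination} is stated for an arbitrary input digraph $G$, not only for ST graphs, so the sentence ``the reduction already in force: $G$\ldots is an ST graph'' is not supported by the text. Fortunately your core computation never uses that assumption. With the paper's definition of disimplicial as ``$N^-(w)\to N^+(v)$ is a diclique'', the simplification you attribute to the ST structure is already built in, and both directions of your proof go through verbatim for general digraphs. The only place the ST hypothesis genuinely helps is in guaranteeing that the disimplicial arc witnessing the hypothesis on $v\in V_{\OUT}$ has $v$ as its \emph{tail}; for general digraphs one simply reads the proposition that way (the ``resp.'' convention and the way it is invoked in the proof of the next proposition make this the intended reading). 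So you may safely drop the ST preamble and present the argument as you have it.
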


The next proposition shows that, as required, $S_i$ is indeed maximal.  That is, the algorithm to compute $S_i$ is correct. 

\begin{proposition}
 If $v \to w$ is a disimplicial arc of $G_i$, then either $v \in V(S_i)$ or $w \in V(S_i)$.
\end{proposition}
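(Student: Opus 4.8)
The plan is to argue by contradiction: suppose $v \to w$ is a disimplicial arc of $G_i$ but neither endpoint was put into $S_i$. The key observation is that $v \to w$ must have ``survived'' the construction of $S_i$ without being blocked by any conflicting insertion, and I must trace why the algorithm nonetheless refused to add an arc at $v$ (or at $w$). The two obvious subcases from the algorithm's description are: (a) when $v$ (resp.\ $w$) was examined as a member of $V_{\OUT} \cup V_{\IN}$, the test for disimpliciality at $v$ failed; or (b) the test succeeded but $L(v) \subseteq V(S_i)$ at that moment. Case (a) is immediately impossible by Proposition~\ref{prop:disimplicial in L}: since $v$ is an endpoint of the disimplicial arc $v \to w$ and $v \in V_{\OUT} \cup V_{\IN}$, that proposition tells us $v \to w$ is disimplicial in $G_i$ \emph{iff} $w \in L(v)$; so $w \in L(v)$ and the test at $v$ must succeed (and symmetrically at $w$).

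So the real content is case (b), and before that I must check the hypothesis of Proposition~\ref{prop:disimplicial in L} actually applies, i.e.\ that \emph{at least one} of $v,w$ lies in $V_{\OUT} \cup V_{\IN}$. \textbf{This is the main obstacle.} I would handle it by induction on $i$, or rather by using maximality of the previous step: for $i = 1$ every vertex of $G_1 = G$ is ``new,'' so one can regard $V(S_0) = V(G)$ and $V_{\OUT} \cup V_{\IN} = V(G)$ (the base case should be stated so the first iteration examines all vertices). For $i > 1$, suppose for contradiction that neither $v$ nor $w$ is in $V_{\OUT} \cup V_{\IN}$. Then no neighbor of $v$ and no neighbor of $w$ was removed when passing from $G_{i-1}$ to $G_i$, so $N_{i-1}(v) = N_i(v)$ and $N_{i-1}(w) = N_i(w)$, and in fact the arc $v\to w$ together with $N_i(w) \to N_i(v)$ already lived entirely inside $G_{i-1}$. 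Hence $v \to w$ was already disimplicial in $G_{i-1}$. But $S_{i-1}$ was maximal (inductive hypothesis / Proposition applied at step $i-1$), so one of $v,w$ belongs to $V(S_{i-1})$ — contradicting that $v,w \in V(G_i) = V(G_{i-1}) \setminus V(S_{i-1})$. Therefore at least one endpoint, say $v$, is in $V_{\OUT} \cup V_{\IN}$.

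Now back to case (b). When the algorithm processed $v$, it found the disimpliciality test positive (by the argument above using Proposition~\ref{prop:disimplicial in L}) but declined to insert because $L(v) \subseteq V(S_i)$. Since $w \in L(v)$ by Proposition~\ref{prop:disimplicial in L}, this forces $w \in V(S_i)$ — contradicting our assumption that $w \notin V(S_i)$. The only remaining loose end is the possibility that $v$ was processed \emph{before} some later insertion filled up $L(v)$: I would note that the algorithm only ever adds vertices to $S_i$, so $V(S_i)$ is monotone increasing during the construction; thus if $v$ was skipped because $L(v) \setminus V(S_i) = \emptyset$ at the time $v$ was examined, then a fortiori $L(v) \subseteq V(S_i)$ at the end. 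Combined with $w \in L(v)$, we again get $w \in V(S_i)$, a contradiction. This exhausts all cases, so every disimplicial arc of $G_i$ has an endpoint in $V(S_i)$, i.e.\ $S_i$ is maximal.

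A minor point worth spelling out in the write-up: Proposition~\ref{prop:disimplicial in L} is stated for $v \in V_{\OUT} \cup V_{\IN}$, and I have used it at $v$ (which I showed lies in that set) and implicitly would like to also know $w \in L(v)$ pins down \emph{which} arc at $v$ is disimplicial; that is exactly the ``only if'' direction of the proposition, so no extra work is needed. I expect the whole argument to be short once the ``some endpoint is new'' reduction is isolated as its own paragraph, since that is the only step requiring the maximality of the previous iteration rather than purely local reasoning about $G_i$.
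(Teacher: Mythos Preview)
Your approach is essentially the paper's: both arguments use the maximality of $S_{i-1}$ to conclude that $v \to w$ could not have been disimplicial in $G_{i-1}$, deduce that some relevant neighbor was removed in passing to $G_i$, and then invoke Proposition~\ref{prop:disimplicial in L} together with the algorithm's behavior at the ``newly exposed'' endpoint.

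One imprecision worth tightening: from your neighborhood-preservation argument you conclude only that \emph{some} endpoint lies in $V_{\OUT} \cup V_{\IN}$, and then write ``say $v$'' before proceeding as though $v \in V_{\OUT}$. But the roles of $v$ and $w$ are not symmetric (one is a tail, the other a head), and the case $v \in V_{\IN}$, $v \notin V_{\OUT}$, $w \notin V_{\OUT} \cup V_{\IN}$ is not covered by your subsequent reasoning (at such a $v$ the algorithm inspects in-arcs $\ell \to v$, not $v \to w$). The fix is immediate and already implicit in your own argument: disimpliciality of $v \to w$ depends only on $N^+(v)$ and $N^-(w)$, so you actually get the sharper conclusion $v \in V_{\OUT}$ or $w \in V_{\IN}$ --- exactly what the paper obtains by exhibiting the witnessing pair $x,y$ with $x \to y \notin E(G)$ and observing one of them was removed. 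With that adjustment your proof matches the paper's.
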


\begin{proof}
  Observe that $v \to w$ is not disimplicial in $G_{i-1}$, since otherwise either $v$ or $w$ would have been removed in the update from $G_{i-1}$ to $G_i$, by the maximality of $S_{i-1}$.  Hence, there exist $x,y \in V(G_{i-1})$ such that $y \in N^+(v)$, $x \in N^-(w)$ and $x \to y \not\in E(G)$.  Since $v\to w$ is disimplicial in $G_{i}$, then either $x$ or $y$ does not belong to $G_i$.  In the former case $x \in V(S_{i-1})$ and $w \in V_{\IN}$, while in the latter case $y \in V(S_{i-1})$ and $v \in V_{\OUT}$.  Both cases are analogous, so suppose $v \in V_{\OUT}$.  By Proposition~\ref{prop:disimplicial in L}, $w \in L(v)$, while $v \to \ell$ is disimplicial for every $\ell \in L(v)$.  Consequently, $v$ is ignored by the algorithm (i.e., $v \not\in V(S_i)$) only if $w \in L(v) \subseteq V(S_i)$.
\end{proof}

Each time an arc $v \to w$ is evaluated to be disimplicial, the algorithm works as follows.  First, the vertices in $N_i^+(v) \cup N_i^-(w)$ are marked, and a variable $e$ is initialized to $0$.  The purpose of $e$ is to count the number of arcs that leave a vertex in $N_i^-(w)$ to enter a vertex in $N_i^+(v)$.  To compute $e$, each $x \in H_i^-(y)$ is traversed, for every $y \in N_i^+(v)$. If $x$ is marked, then $x \in N_i^-(w)$ and $y \in N_i^+(v)$, thus $e$ is increased by $1$; otherwise $x \not\in N_i^-(w)$, thus $e$ remains unchanged.  The arc $x \to y$ is also marked so as to avoid counting it again.  When the execution for $N_i^+(v)$ is done, the algorithm proceeds to traverse each $y \in H_i^+(x)$, for every $x \in N_i^-(w)$, increasing $e$ by $1$ when $y$ is marked and $x \to y$ is not.  At the end, all the marks are cleared.  Clearly, $e$ counts the number of arcs of $G_i$ leaving $N_i^-(w)$ and entering $N_i^+(v)$ as each arc $x \to y$ with $x \in N_i^-(w)$ and $y \in N_i^+(v)$ is traversed at least once.  Thus $v \to w$ is disimplicial if and only if $e = d_i^+(v)d_i^-(w)$.

The algorithm implements $G_i$ with the $h$-digraph structure.  To compute $S_i$, the vertices in $V = V_{\OUT} \cup V_{\IN}$ need to be traversed; recall that, by definition, $V = \bigcup_{y \in S_{i-1}}N_i(y)$.  For each traversed $v \in V$, a vertex $\ell \in L(v)$ needs to be located; this costs $O(h_i(v))$ time if the first vertex given by \texttt{MinN} is taken. Following, $v \to \ell$ (or $\ell \to v$) is queried to be disimplicial.  For this, the vertices in $N_i^+(v) \cup N_i^-(\ell)$ are first marked in $O(d_i(v) + d_i(\ell))$, and then $e$ is computed in $O\left(\sum_{z \in N_i(v) \cup N_i(\ell)}h_i(z)\right) = O(\Delta_i \eta_i)$ time.  Moreover, note that every arc is traversed $O(1)$ times, thus $O(\min\{m_i, \Delta_i \eta_i\})$ in actually spent to check if $v \to \ell$ is disimplicial.  When $v \to \ell$ (or $\ell \to v$) is disimplicial, \texttt{MinN} is invoked to obtain $L(v)$, which is then traversed so as to locate the arc $v \to w$ (or $w \to v$) to be inserted into $S_i$.  Note that every vertex $z \in L(v)$ that is traversed while looking for $w$ belongs to $V(S_i)$ at the end of step $i$.  Also, $z$ will be evaluated no more than $O(d_i(z))$ times, once for each $v \in N_i(z)$ such that $L(v)$ is considered.  Thus, all the required traversals to the sets $\{L(v) \mid v \in V\}$ consume $O\left(\sum_{z \in V(S_i)}{d_i(z)}\right)$ time. Summing up, the time required to compute $S_i$ is
\begin{align*}
&O\left(
    \sum_{y \in S_{i-1}}\left(
        \sum_{v \in N(y)}(
           h_i(v) + \min\{m_i, \Delta_i \eta_i\}
        )
     \right) + 
     \sum_{z \in V(S_i)}d_i(z)
\right)
  =\\
&O\left(
  \min\{m, \Delta \eta\}\sum_{y\in S_{i-1}}d(y) + 
  \sum_{z \in V(S_i)}d(z)
\right)
\end{align*}

Before the algorithm starts, $G_1$ is initialized with an invocation to \texttt{Initialize} at the cost of $O(\alpha m)$ time.  Similarly, after each iteration, $V(S_i)$ is removed from $G_i$ using the operation \texttt{Remove}.  Note that each vertex is removed exactly once, hence $O(\alpha m)$ time is totally consumed.   Let $k$ be the number of iterations required by the algorithm and $S$ be the output disimplicial elimination.  Since $S_1$ can be computed in $O(\alpha m)$ time and $\bigcup_{i=1}^k S_i = S$ is a matching, we obtain that the total time required by the algorithm is
\begin{align*}
&O\left(
  \alpha m + 
  \sum_{i=2}^k\left(
      \min\{m, \Delta H\}\sum_{y\in S_{i-1}}d(y) + \sum_{z \in V(S_i)}d(z)
  \right)
\right) =\\ 
&O\left(\alpha m + \min\{m, \Delta H\}\sum_{y \in V(S)}d(y) + \sum_{z \in V(S)}d(z)\right) = O(m\min\{m, \Delta H\})
\end{align*}
Since the $h$-digraph structure uses $O(m)$ bits, the space complexity is linear.

\subsection{Disimplicial $M$-eliminations}

We now consider the restricted problem of finding a maximal disimplicial $M$-elimination of a digraph $G$, when an input matching $M$ is given.  A \emph{disimplicial $M$-elimination} is just a disimplicial elimination $S$ of $G$ included in $M$; $S$ is \emph{maximal} when no arc of $M \setminus S$ is disimplicial in $G \setminus V(S)$.  

This time, the idea is to take advantage of the relation between disimplicial arcs and transitive vertices.  Say that a sequence $v_1, \ldots, v_k$ is a \emph{transitive $V$-elimination} of a digraph $D$, for $V \subseteq V(D)$, when $v_i$ is transitive in $D \setminus \{v_1, \ldots, v_{i-1}\}$, for every $1 \leq i \leq k$.  Suppose $S = v_1 \to w_1 , \ldots, v_k \to w_k \subseteq M$ and let $G_1 = G$ and $M_1 = M$.  For $1 \leq i \leq k$, define
\begin{itemize}
\item $G_{i+1} = G \setminus \{v_i, w_i\}$,
\item $M_{i+1} = M_{i} \setminus \{v_i \to w_i\}$,
\item $S_i = v_1 \to w_1 , \ldots, v_i \to w_i$,
\item $D_i = \JOIN(G_i, M_i)$, 
\item $V_i = \{(v, w) \mid v \to w \in M_i\}$, and
\item $T_i = (v_1, w_1) , \ldots, (v_i, w_i)$
\end{itemize}
By definition, $D_i$ has a vertex $(v,w)$ for each $v \to w \in M_i$ and a vertex $(v,v)$ for each $v \in G_i \setminus V(M_i)$ where $(v, w) \to (x,y)$ is an arc of $D_i$ if and only if $v \to y$.  It is not hard to see, then, that $D_{i+1} = \JOIN(G_{i+1}, M_{i+1}) = \JOIN(G_i, M_i) \setminus \{(v_i, w_i)\} = D_i \setminus \{(v_i, w_i)\}$.  Moreover, by Theorem~\ref{thm:disimplicial is transitive}, $(v_i, w_i)$ is transitive in $D_i$ if and only if $v_i \to w_i$ is disimplicial in $G_{i}$.  Hence, by induction, $S = S_k$ is a disimplicial $M$-elimination of $G$ if and only if $T_k$ is a transitive $V$-elimination of $D$ for $V = V_1$ and $D = D_1$.  Moreover, $S$ is maximal if and only if $T_k$ is maximal.  This discussion is summarized in the following theorem.

\begin{theorem}\label{thm:transitive elimination}
  Let $M$ be a matching of a digraph $G$, $S = v_1 \to w_1, \ldots, v_k \to w_k$ be a sequence of arcs of $G$, $D = \JOIN(G, M)$, and $T = (v_1, w_1) , \ldots, (v_k, w_k)$.  Then, $S$ is a maximal disimplicial $M$-elimination of $G$ if and only if $T$ is a maximal transitive $V$-elimination of $D$. 
\end{theorem}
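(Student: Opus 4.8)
The plan is to turn the inductive discussion that precedes the statement into a precise argument resting entirely on Theorem~\ref{thm:disimplicial is transitive}. I would keep the notation $G_i, M_i, S_i, D_i, V_i, T_i$ from the bullet list above and set $D = D_1$, $V = V_1$, so that $V_i = \{(v,w) \mid v \to w \in M_i\}$ and $G_i = G \setminus \{v_1, w_1, \ldots, v_{i-1}, w_{i-1}\}$. First I would observe that all these objects are well defined in either direction of the equivalence: if $S$ is a disimplicial $M$-elimination, then each $v_i \to w_i$ is an arc of $G_i$, so its endpoints were never removed at an earlier step, whence the pairs $(v_1, w_1), \ldots, (v_k, w_k)$ are distinct and $v_i \to w_i \in M_i$; dually, if $T$ is a transitive $V$-elimination, then each $(v_i, w_i)$ is a vertex of $D \setminus \{(v_1,w_1), \ldots, (v_{i-1},w_{i-1})\}$ lying in $V = V_1$, which again forces the pairs to be distinct and $v_i \to w_i \in M_i$ (in particular $S \subseteq M$).

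The key step I would isolate is the structural identity
\[
D_{i+1} = D_i \setminus \{(v_i, w_i)\} \qquad\text{and}\qquad V_{i+1} = V_i \setminus \{(v_i, w_i)\} \qquad (1 \le i \le k),
\]
obtained by unfolding the definition of $\JOIN$. The arc $v_i \to w_i \in M_i$ contributes exactly one vertex $(v_i, w_i)$ to $D_i$, and it is the only vertex of $D_i$ mentioning $v_i$ or $w_i$: since $M_i$ is a matching no other of its arcs touches these endpoints, and no ``diagonal'' vertex $(u,u)$ does either because $v_i, w_i \in V(M_i)$ (this also settles the degenerate case $v_i = w_i$, where $M_i$ contains the loop $v_i \to v_i$ and there is no competing diagonal vertex). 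Hence $V(D_{i+1}) = V(D_i) \setminus \{(v_i, w_i)\}$. Moreover, in $D_{i+1} = \JOIN(G_{i+1}, M_{i+1})$ one has $(a,b) \to (x,y)$ iff $a \to y \in E(G_{i+1})$, and since $G_{i+1} = G_i \setminus \{v_i, w_i\}$ is an induced subdigraph of $G_i$, this holds iff $a \to y \in E(G_i)$ whenever $a, b, x, y \notin \{v_i, w_i\}$; thus $D_{i+1}$ and $D_i \setminus \{(v_i, w_i)\}$ have the same arcs. The identity for $V$ is immediate from $M_{i+1} = M_i \setminus \{v_i \to w_i\}$.

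Iterating the identity gives $D_i = D \setminus \{(v_1,w_1), \ldots, (v_{i-1},w_{i-1})\}$ and $V_i = V \setminus \{(v_1,w_1), \ldots, (v_{i-1},w_{i-1})\}$. Now I would induct on $i$: applying Theorem~\ref{thm:disimplicial is transitive} to the matching $M_i$ of $G_i$ and the arc $v_i \to w_i \in M_i$ yields that $v_i \to w_i$ is disimplicial in $G_i$ if and only if $(v_i, w_i)$ is transitive in $D_i$, which—since $D_i = D \setminus \{(v_1,w_1), \ldots, (v_{i-1},w_{i-1})\}$ and $(v_i, w_i) \in V$—is precisely the $i$-th defining condition for $T$ being a transitive $V$-elimination of $D$. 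Letting $i$ range over $1, \ldots, k$ shows that $S$ is a disimplicial $M$-elimination of $G$ iff $T$ is a transitive $V$-elimination of $D$. For maximality, take $i = k$: the iterated identity gives $G \setminus V(S) = G_{k+1}$, $M \setminus S = M_{k+1}$, $D \setminus \{(v_1,w_1), \ldots, (v_k,w_k)\} = D_{k+1} = \JOIN(G_{k+1}, M_{k+1})$, and $V_{k+1} = \{(v,w) \mid v \to w \in M_{k+1}\}$; applying Theorem~\ref{thm:disimplicial is transitive} once more to $G_{k+1}$ and $M_{k+1}$, some arc of $M_{k+1}$ is disimplicial in $G_{k+1}$ iff some vertex of $V_{k+1}$ is transitive in $D_{k+1}$, and negating this gives that $S$ is maximal iff $T$ is maximal.

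I expect the only delicate point to be the verification of the structural identity $D_{i+1} = D_i \setminus \{(v_i, w_i)\}$—namely, that deleting the two endpoints $v_i, w_i$ in $G$ deletes exactly the single vertex $(v_i, w_i)$ in the $\JOIN$, leaving every remaining vertex and every arc among them unchanged, including the corner case where the matching contains a loop. This is careful bookkeeping with the definition of $\JOIN$ rather than a real obstacle; everything else is a routine induction layered on Theorem~\ref{thm:disimplicial is transitive}.
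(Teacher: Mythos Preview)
Your proposal is correct and follows essentially the same approach as the paper: the paper's proof is precisely the discussion preceding the theorem statement, which establishes the identity $D_{i+1} = D_i \setminus \{(v_i,w_i)\}$ and then applies Theorem~\ref{thm:disimplicial is transitive} inductively. You have simply fleshed out the bookkeeping behind that identity (in particular the point that $(v_i,w_i)$ is the unique vertex of $D_i$ mentioning $v_i$ or $w_i$) and made the maximality step explicit, which the paper leaves to the reader.
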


In view of Theorem~\ref{thm:transitive elimination}, we discuss how to obtain a maximal transitive $V$-elimination of a digraph $D_1 = D$.  The algorithm works in an iterative manner from $D_1 = D$.  At each step $i$, a transitive vertex $v_i \in V$ is removed from $D_i$ so as to obtain $D_{i+1}$; if no such vertex exists, then the algorithm halts with output $v_1, \ldots, v_{i-1}$.  To be able to find $v_i$ efficiently, the following data is maintained by the algorithm prior to the execution of iteration $i$:
\begin{itemize}
 \item $D_i$, implemented with the $h$-digraph structure,
 \item the set of transitive vertices $T_i$ of $D_i$,
 \item the number $t_i(v)$ of arcs leaving $N^-(v)$ and entering $N^+(v)$ in $D_i$, for $v \in V(D_i)$.
\end{itemize}
With the above information, any vertex of $T_i$ is taken by the algorithm to play the role of $v_i$. Once $v_i$ is selected, the algorithm has to update its data structure for the next iteration.  The update of $D_i$ into $D_{i+1}= D_i \setminus \{v_i\}$ is handled by the \texttt{Remove} operation of the $h$-digraph structure.  The update of $t_i$ into $t_{i+1}$ is done in two phases.  The first phase decrements $t_i(w)$ by $1$ for each arc $z \to w$ such that $w, z \in N^-(v)$, while the second phase decrements $t_i(w)$ by $1$ for each arc $w \to z$ such that $w,z \in N^+(v)$.  The \texttt{N'} operation of the $h$-digraph structure is employed for this step.  Finally, observe that $w \in T_{i+1}$ if and only if either $w \in T_i$ or $w \in N(v_i)$ and $t_{i+1}(w) = d^-(w)d^+(w)$.  Thus, the update of $T_i$ into $T_{i+1}$ takes $O(d(v_i))$ time.  Before the first step can take place, $D_1$ is initialized with an invocation to \texttt{Initialize}.  Note that \texttt{Remove} and \texttt{N'} are called $O(1)$ times for each vertex of $D$, thus $O(\alpha m)$ total time is consumed by the algorithm.  As for the space, $D_i$ requires $O(m)$ space while the remaining variables consume $O(n)$ bits.

Since $D = \JOIN(G, M)$ can be computed in linear time, $\alpha_G = \Theta(\alpha_D)$, and $m_G = \Theta(m_D)$ we conclude that a maximal disimplicial $M$-elimination can be computed in $O(\alpha_G m_G)$ time and linear space.

\section{Reduced dicliques}
\label{sec:WDI and DI}

By definition, a reflexive vertex $v$ is transitive if and only if $v \to v$ is disimplicial. Hence, if $D$ is an order graph, then $E(D)$ can be partitioned into a family of dicliques, all of which are reduced.  Moreover, by Proposition~\ref{prop:SPLIT dicliques}, $G = \SPLIT(D)$ is an ST graph and $E(G)$ can also be partitioned into a family of dicliques, all of which are reduced.  The purpose of this section is to study two graph classes that admit this kind of partition.

\subsection{Weakly diclique irreducible digraphs}

Say that a digraph is \emph{weakly diclique irreducible (WDI)} when all its arcs belong to a reduced diclique.  By Propositions \ref{prop:SPLIT dicliques}~and~\ref{prop:twin reduction}, $G$ is WDI if and only both $\SPLIT(G)$ and $\REPG(G)$ are WDI; for this reason, we consider only ST graphs with no twins for this section.  The next theorem, combined with Lemma~\ref{lem:bip then join}, shows that there is a one-to-one correspondence between the class of twin-free ST graphs that admit a perfect matching of disimplicial arcs and the class of order graphs.  A direct consequence of this theorem is that the recognition of WDI digraphs is harder than the recognition of order graphs.

\begin{theorem}
 A reflexive oriented graph $D$ is transitive if and only if $G = \SPLIT(D)$ is WDI.  Furthermore, if $G$ is WDI, then the perfect matching $M = \{\OUT(v) \to \IN(v) \mid v \in V(D)\}$ is the set of disimplicial arc of $G$.
\end{theorem}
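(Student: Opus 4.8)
The plan is to prove both implications by transporting disimplicial arcs and dicliques across the $\SPLIT$ operator (Proposition~\ref{prop:SPLIT dicliques} and Corollary~\ref{cor:SPLIT disimplicials}), with one auxiliary observation doing most of the work: \emph{in any digraph that is simultaneously reflexive and oriented, every disimplicial arc is a loop.} To see this, suppose $a\to b$ is disimplicial in such a digraph; then $N^-(b)\to N^+(a)$ is a diclique, and because $a\to b$ is an arc while every vertex is reflexive, both $a$ and $b$ belong to $N^-(b)$ and to $N^+(a)$. Taking $b$ on the left and $a$ on the right of this diclique gives $b\to a$, and orientedness then forces $a=b$. Consequently, every reduced diclique of a reflexive oriented digraph has the form $N^-(v)\to N^+(v)$ for some transitive vertex $v$.

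For the forward implication I would assume $D$ transitive. Then $D$ is reflexive, oriented and transitive, so for each $v\in V(D)$ the loop $v\to v$ is disimplicial in $D$ (this is precisely transitivity of $v$), and hence $\OUT(v)\to\IN(v)$ is disimplicial in $G$ by Corollary~\ref{cor:SPLIT disimplicials}; its reduced diclique is $N_G^-(\IN(v))\to N_G^+(\OUT(v))$, which by the construction of $\SPLIT$ equals $\OUT(N_D^-(v))\to\IN(N_D^+(v))$. Now every arc of the ST graph $G$ has the form $\OUT(v)\to\IN(w)$ with $v\to w\in E(D)$, and since $v\in N_D^-(v)$ and $w\in N_D^+(v)$, this arc lies in the reduced diclique just exhibited. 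Thus every arc of $G$ lies in a reduced diclique, i.e., $G$ is WDI, and along the way we have shown that all arcs of $M$ are disimplicial in $G$.

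For the converse I would assume $G$ is WDI and show that every loop $v\to v$ is disimplicial in $D$ (which is exactly $D$ being transitive). Fix $v\in V(D)$; the arc $\OUT(v)\to\IN(v)$ exists because $D$ is reflexive, so by WDI it lies in some reduced diclique $B$ of $G$. Since $G$ is an ST graph, $B$ is the reduced diclique of a disimplicial arc $\OUT(p)\to\IN(q)$ of $G$; then $p\to q$ is disimplicial in $D$ by Corollary~\ref{cor:SPLIT disimplicials}, hence $p=q$ by the auxiliary observation applied to $D$. Unwinding the statement ``$\OUT(v)\to\IN(v)$ is an arc of $B=N_G^-(\IN(p))\to N_G^+(\OUT(p))$'' through the definition of $\SPLIT$ yields $v\to p\in E(D)$ and $p\to v\in E(D)$, so $v=p$ because $D$ is oriented. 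Therefore $B=N_G^-(\IN(v))\to N_G^+(\OUT(v))$, the arc $\OUT(v)\to\IN(v)$ is disimplicial in $G$, and pulling this back once more through Corollary~\ref{cor:SPLIT disimplicials} makes $v\to v$ disimplicial in $D$; as $v$ was arbitrary, $D$ is transitive.

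Finally, for the ``furthermore'' clause note that for \emph{any} reflexive oriented $D$, a disimplicial arc of $G$ is of the form $\OUT(p)\to\IN(q)$ with $p\to q$ disimplicial in $D$, so $p=q$ by the auxiliary observation and the disimplicial arcs of $G$ are contained in $M$; combined with the first half of the converse paragraph, $M$ is exactly the set of disimplicial arcs of $G$ whenever $G$ is WDI. The point I expect to require the most care is that none of this can be argued purely inside $G$: being an ST graph, $G$ is loopless, so the ``disimplicial $\Rightarrow$ loop'' observation simply does not apply to $G$, and the argument must repeatedly return to $D$ through Corollary~\ref{cor:SPLIT disimplicials}; it is the two uses of reflexivity-plus-orientedness of $D$ — first to get $p=q$, then to get $v=p$ — that pin down which reduced diclique of $G$ can contain $\OUT(v)\to\IN(v)$.
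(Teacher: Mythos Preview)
Your proof is correct, and the overall logic parallels the paper's: both hinge on the fact that in a reflexive oriented digraph the only disimplicial arcs are loops, which is exactly your auxiliary observation and which the paper phrases as ``$\OUT(v)\to\IN(v)$ belongs to a reduced diclique if and only if it is disimplicial.'' The one place where you genuinely diverge is the ``furthermore'' clause. The paper argues that $G$ is twin-free (from orientedness of $D$) and then invokes Lemma~\ref{lem:disimplicial is thin} to conclude that the disimplicial arcs of $G$ form a matching contained in the thin arcs, hence equal to $M$. You instead pull each disimplicial arc of $G$ back to $D$ through Corollary~\ref{cor:SPLIT disimplicials} and apply your auxiliary observation there, never touching thin arcs or twin-freeness. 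Your route is more self-contained and makes the role of reflexivity-plus-orientedness completely explicit; the paper's route reuses machinery it has already built for other purposes.
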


\begin{proof}
 If $D$ is a reflexive oriented graph, then (i) $M$ is a perfect matching of $G$, and (ii) $\OUT(v) \to \IN(w)$ and $\OUT(w) \to \IN(v)$ are both arcs of $G$ if only if $v = w$.  Then, $\OUT(v) \to \IN(v)$ belongs to a reduced diclique if and only if it is disimplicial.  Since every arc $\OUT(v) \to \IN(w) \in E(G)$ belongs to the diclique $\{\OUT(v)\} \to \{\IN(v), \IN(w)\}$ of $G$, we conclude that $G$ is WDI if and only if all the arcs of $M$ are disimplicial.  Therefore, by Theorem~\ref{thm:transitive is disimplicial}, $G$ is WDI if and only if $D$ is transitive.  Moreover, since $G$ is twin-free by (ii), and the set of thin arcs is a matching containing $M$ by Lemma~\ref{lem:disimplicial is thin}, we conclude that no arc of $E(G) \setminus M$ is disimplicial.
\end{proof}

The following theorem shows that the recognition of WDI digraphs is simpler than the problem of listing the \emph{acyclic triangles} $a \to b$, $a \to c$, $c \to b$ of a digraph.  All such triangles can be found in either $O(\alpha m)$ time and $O(m)$ space or $O(n^\omega)$ time and $\Theta(n^2)$ space~\cite{ChibaNishizekiSJC1985}.  We conclude then that, unless it is proved that recognizing order graphs is strictly easier than listing triangles, the recognition of WDI digraphs is well solved. 

\begin{theorem}\label{thm:WDI characterization}
  An ST graph $G$ with no twins is WDI if and only if:
  \begin{itemize}
   \item $D = \JOIN(G)$ is transitive, and 
   \item for every arc $a \to b$ of $D$ there exists a vertex $c$ of $D$ such that $a \to c$ and $c \to b$ are also arcs of $D$.   
  \end{itemize}
\end{theorem}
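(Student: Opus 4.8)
The plan is to reduce the WDI condition on $G$ to a pair of conditions on $D=\JOIN(G)$, exploiting the machinery already built in Section~\ref{sec:disimplicial vs transitive}. Recall that $\JOIN(G)=\JOIN(G,M)$ where $M$ is the set of thin arcs of $G$, and that by Lemma~\ref{lem:disimplicial is thin} every disimplicial arc of the twin-free ST graph $G$ is thin, hence lies in $M$. So the disimplicial arcs of $G$ are exactly the arcs $v\to w\in M$ that are disimplicial, which by Theorem~\ref{thm:disimplicial is transitive} correspond exactly to the transitive vertices $(v,w)$ of $D$. The vertices of $D$ of the form $(v,v)$ with $v\notin V(M)$ will never be endpoints of a reduced diclique in the correspondence, so I first want to observe that if $G$ is WDI then in fact $M$ must be a perfect matching of $G$ (every vertex of $G$ lies on a disimplicial, hence thin, arc), and conversely the stated conditions will force this as well; this lets me apply Lemma~\ref{lem:join then bip} to recover $G$ from $D$ cleanly.

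For the forward direction, suppose $G$ is WDI. Every arc of $G$ lies in a reduced diclique, i.e. a maximal diclique containing a disimplicial arc; since that disimplicial arc is thin, every vertex of $G$ is covered by $M$, so $M$ is a perfect matching and $G\cong\SPLIT(D)$ by Lemma~\ref{lem:join then bip}. Now take any arc $a\to b$ of $D$. Translating through $\JOIN$, this corresponds to an arc $\OUT(a)\to\IN(b)$ of $G$ (more precisely, to the structural fact $a\to y\in E(G)$ for the relevant relabelling); that arc lies in a reduced diclique $B=N(w)\to N(v)$ of $G$ for some disimplicial arc $v\to w$, which corresponds to a transitive vertex $(v,w)$ of $D$. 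The fact that $\OUT(a)\to\IN(b)$ lies in the diclique generated by the disimplicial arc $v\to w$ says precisely that $a\to w\in E(G)$ and $v\to b\in E(G)$, i.e. that $a\to (v,w)$ and $(v,w)\to b$ are arcs of $D$. Taking $c=(v,w)$ gives the second bullet. For transitivity of $D$: every vertex of $D$ is either $(v,w)$ with $v\to w\in M$, or $(v,v)$ with $v\notin V(M)$ — but the latter does not occur since $M$ is perfect. And each $(v,w)$ with $v\to w\in M$ is an endpoint of the disimplicial arc witnessing that some arc through $\OUT(v)$ or $\IN(w)$ lies in a reduced diclique; I will need to argue that in a twin-free $G$ this forces $v\to w$ itself to be disimplicial (not merely that $\OUT(v)$ or $\IN(w)$ lies in some reduced diclique with a different witness). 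This is where twin-freeness is essential: if $\OUT(v)\to\IN(w)$ lies in the reduced diclique of a disimplicial arc $v'\to w'$, then $\OUT(v)\to\IN(w')$ and $\OUT(v')\to\IN(w)$ are arcs, and since $v\to w$ is thin while $v'\to w'$ is thin, a degree/minimality argument (as in the proof of Lemma~\ref{lem:disimplicial is thin} and the preceding theorem) forces $v=v'$ and $w=w'$. Hence every vertex of $D$ is transitive.

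For the converse, assume $D=\JOIN(G)$ is transitive and the triangle condition holds. Transitivity of $D$ means every vertex $(v,w)$ (with $v\to w\in M$, the only kind, since vertices $(v,v)$ with $v\notin V(M)$ would be non-transitive unless $v$ is isolated in a way excluded by connectivity — I should check this edge case, but it forces $M$ perfect) is transitive, so by Theorem~\ref{thm:disimplicial is transitive} every thin arc $v\to w\in M$ is disimplicial; since $M$ is then a perfect matching of disimplicial arcs, $G\cong\SPLIT(D)$ by Lemma~\ref{lem:join then bip}. Now take an arbitrary arc of $G$; relabelling it as $\OUT(a)\to\IN(b)$, we have $a\to b$ an arc of $D$, so by the triangle condition there is $c=(v,w)$ with $a\to c$ and $c\to b$ arcs of $D$, i.e. $a\to w\in E(G)$ and $v\to b\in E(G)$. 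Since $v\to w$ is disimplicial, its reduced diclique is $N(w)\to N(v)$, which contains both $\OUT(a)$ (as $a\in N(w)$) and $\IN(b)$ (as $b\in N(v)$), hence contains the arc $\OUT(a)\to\IN(b)$. So every arc of $G$ lies in a reduced diclique, i.e. $G$ is WDI.

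The main obstacle I anticipate is the bookkeeping in the forward direction, specifically the claim that in a twin-free $G$ the reduced diclique through $\OUT(v)\to\IN(w)$ must be witnessed by $v\to w$ itself — this is really the content of Lemma~\ref{lem:disimplicial is thin} pushed one step further, and getting the degree inequalities to pin down the unique thin arc at each vertex is the delicate part. Everything else is a translation exercise between $\JOIN$/$\SPLIT$ and the diclique structure, relying on Proposition~\ref{prop:SPLIT dicliques}, Theorem~\ref{thm:disimplicial is transitive}, and Lemmas~\ref{lem:bip then join} and~\ref{lem:join then bip}.
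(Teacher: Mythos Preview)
Your core strategy matches the paper's: translate between arcs of $G$ and arcs of $D=\JOIN(G)$, show thin arcs of $G$ correspond to transitive vertices of $D$ via a degree argument, and use the triangle witness $c$ as the disimplicial arc whose reduced diclique covers a given arc of $G$. However, you graft onto this an auxiliary claim that is both unnecessary and false: that the set $M$ of thin arcs is a perfect matching of $G$.

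Counterexample: take sources $a,b,c$, sinks $x,y$, and arcs $a\to x$, $a\to y$, $b\to x$, $c\to y$. This ST graph is twin-free and WDI (the reduced dicliques $\{a,b\}\to\{x\}$ and $\{a,c\}\to\{y\}$ cover all four arcs), yet $a$ has two neighbours $x,y$ of equal minimum degree, so $\TN(a)$ is undefined and $a\notin V(M)$. Your inference ``every vertex of $G$ is covered by $M$'' conflates lying \emph{in} a reduced diclique with being an \emph{endpoint} of its disimplicial arc. In the converse direction your claim that a vertex $(v,v)$ with $v\notin V(M)$ ``would be non-transitive unless $v$ is isolated'' is also wrong: since $G$ is an ST graph, any such vertex is a source or sink in $D$ and hence vacuously transitive. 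Both directions of your argument then lean on Lemma~\ref{lem:join then bip} and the relabelling $G\cong\SPLIT(D)$, which you cannot invoke.

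The paper sidesteps all of this. For transitivity of $D$ it only treats vertices that are neither sources nor sinks---exactly those of the form $(v,w)$ with $v\to w$ thin---and runs the degree argument you anticipated: if the thin arc $v\to w$ lies in the reduced diclique $N(y)\to N(x)$, then $N(x)\subseteq N(v)$ and $N(y)\subseteq N(w)$, while thinness gives $d(v)\le d(x)$ and $d(w)\le d(y)$, forcing $N(v)=N(x)$ and $N(w)=N(y)$; so $v\to w$ is itself disimplicial. For the converse it works straight from the definition of $\JOIN$ (an arc $v\to w$ of $G$ yields an arc $(v,v')\to(w',w)$ of $D$) rather than through $\SPLIT(D)$, and observes that the middle vertex $c$, having both an in- and an out-neighbour in $D$, must come from a thin arc. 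Drop the perfect-matching detour and argue directly from $\JOIN$; the rest of your plan is sound.
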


\begin{proof}
  Suppose $G$ is WDI.  By definition, every vertex $(v, w)$ of $D$ that is neither a source nor a sink corresponds to a thin arc $v \to w$ of $G$.  Since $G$ is WDI, we know that $v \to w$ belongs to a diclique $N(y) \to N(x)$ for some disimplicial arc $x \to y$, thus $N(x) \subseteq N(v)$ and $N(y) \subseteq N(w)$. Moreover, taking into account that $v \to w$ is thin, it follows that $d(v) \leq d(x)$ and $d(w) \leq d(y)$, thus $N(v) = N(x)$ and $N(w) = N(y)$.  Therefore, $v \to w$ is disimplicial in $G$ and, by Theorem~\ref{thm:disimplicial is transitive}, $(v,w)$ is transitive in $D$; in other words $D$ is transitive.  Now, consider any arc $(v, v') \to (w',w)$ of $D$.  By definition, $v \to w$ is an arc of $G$ that belongs to some reduced diclique $N(y) \to N(x)$.  By Lemma~\ref{lem:disimplicial is thin}, $x \to y$ is a thin arc and, since $v \to y$ and $x \to w$, it follows that $(v, v') \to (x, y)$ and $(x,y) \to (w',w)$ are arcs of $D$.  

  For the converse, let $v \to w$ be any arc of $G$ and $(v, v')$ and $(w', w)$ be the vertices of $D$ that correspond to $v$ and $w$ (possibly $v = w'$).  By definition, $(v,v') \to (w,w')$ is an arc of $D$, thus, there exists a vertex $(x, y)$ of $H$ such that $(v,v') \to (x,y)$ and $(x, y) \to (w',w)$ are arcs of $D$ (possibly $v = x$ or $y = w$).  Since $(x,y)$ is neither a source nor a sink of $D$, then it follows that $(x,y)$ is transitive in $D$ and $x \to y$ is a thin arc of $G$.  So, by Theorem~\ref{thm:disimplicial is transitive}, $x \to y$ is a disimplicial arc of $G$ which means that $N(y) \to N(x)$ is a reduced diclique.  Now, taking into account that $(v,v') \to (x,y)$ and $(x, y) \to (w',w)$ are arcs of $D$, it follows that $v \in N(y)$ and $w \in N(x)$, i.e., $v \to w$ belongs to a reduced diclique.  In other words, $G$ is WDI.
\end{proof}

\subsection{Diclique irreducible digraphs}

In the remaining of this section we work with a subclass of WDI graphs, namely the diclique irreducible digraphs.  A digraph $G$ is \emph{diclique irreducible (DI)} when all its maximal dicliques are reduced.  Again, $G$ is DI if and only if both $\SPLIT(G)$ and $\REPG(G)$ are DI, thus we restrict our attention to ST graphs with no twins.  By Theorem~\ref{thm:WDI characterization}, we know that $\JOIN(G)$ is a transitive oriented graph; the following lemma proves that $\JOIN(G)$ must also be reflexive.

\begin{lemma}
 If an ST graph with no twins is DI, then its set of thin arcs is a perfect matching.
\end{lemma}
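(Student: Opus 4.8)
The plan is to prove the equivalent statement that every vertex of $G$ is an endpoint of a thin arc; since the thin arcs already form a matching (this is exactly what makes $\JOIN(G)$ well defined), this immediately gives that they form a \emph{perfect} matching. One may also read the conclusion as saying that $\JOIN(G)$ is reflexive: a vertex $(v,v)$ of $\JOIN(G)$ with $v$ not covered by a thin arc is not reflexive, because $G$, being an ST graph, has no loops; so reflexivity of $\JOIN(G)$ is equivalent to the thin arcs covering $V(G)$. The argument below, however, works directly inside $G$ and does not pass through $\JOIN(G)$.

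Fix an arbitrary vertex $v$ of $G$. By the source/sink symmetry of ST graphs, I would assume $v$ is a source, so $N(v)=N^+(v)$, and this set is nonempty since $G$ is connected with $m>n$ (no isolated vertices). The key first move is to enlarge the whole star $\{v\}\to N(v)$ — not just a single arc $v\to w$ — to a maximal diclique. Concretely, consider the finite nonempty family $\mathcal{F}$ of dicliques $A\to B$ of $G$ with $v\in A$ and $N(v)\subseteq B$ (it contains $\{v\}\to N(v)$), take a member $A\to B$ maximal under componentwise inclusion, and note that any diclique strictly containing it would again lie in $\mathcal{F}$, so $A\to B$ is in fact a maximal diclique of $G$. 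Because $v\in A$ is a source, every $b\in B$ satisfies $v\to b$, i.e.\ $b\in N^+(v)=N(v)$; combined with $N(v)\subseteq B$ this forces $B=N(v)$.

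Now apply the DI hypothesis: $A\to B$ is reduced, hence it contains a disimplicial arc $x\to y$, and by the definition of disimplicial the unique maximal diclique through $x\to y$ is $N(y)\to N(x)$, so $A\to B=N(y)\to N(x)$. In particular $N(x)=B=N(v)$, and since $G$ is twin-free this gives $x=v$. Thus $v\to y$ is a disimplicial arc of $G$, and by Lemma~\ref{lem:disimplicial is thin} it is thin, so $v$ is an endpoint of a thin arc. The case where $v$ is a sink is symmetric, working with the family of dicliques $A\to B$ such that $v\in B$ and $N(v)\subseteq A$, forcing $A=N(v)=N(y)$ and then $y=v$. Since $v$ was arbitrary, every vertex of $G$ is covered by a thin arc, so the set of thin arcs is a perfect matching.

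The one point that requires care — and the only place the argument could go wrong — is the choice made in the enlargement step: one must saturate the entire neighborhood of $v$ on one side before maximizing, so that the resulting maximal diclique has that side equal to $N(v)$ exactly. Pinning down this side precisely is what allows twin-freeness to identify the corresponding endpoint of the disimplicial arc with $v$ itself; starting from a single arc $v\to w$ only yields a membership relation among neighborhoods, not an equality, and the deduction $x=v$ breaks down.
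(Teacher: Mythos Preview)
Your proof is correct and, in fact, cleaner than the paper's. Both arguments aim to show that an arbitrary source $v$ is the tail of some disimplicial (hence thin) arc, but they reach this in different ways. The paper chooses a single neighbor $w$ of minimum degree, extends $v\to w$ to a reduced maximal diclique $N(y)\to N(x)$, uses the minimality of $d(w)$ together with twin-freeness to get $w=y$, and then still needs a separate contradiction argument (involving a second reduced diclique built from a vertex $z\in N(v)\setminus N(x)$) to force $x=v$. Your approach sidesteps this entirely: by enlarging the \emph{whole star} $\{v\}\to N(v)$ to a maximal diclique $A\to B$, you pin down $B=N(v)$ outright, so the disimplicial arc $x\to y$ supplied by DI automatically satisfies $N(x)=N(v)$, and twin-freeness gives $x=v$ in one stroke. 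This is more direct and makes no use of degree comparisons or a second diclique.

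Your final paragraph slightly overstates the case: it is not that starting from a single arc makes the deduction $x=v$ impossible --- the paper does exactly that --- but rather that it costs an additional contradiction step and a careful choice of $w$. Your saturation idea buys you the equality $B=N(v)$ for free, which is precisely what eliminates that extra work.
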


\begin{proof}
  Let $G$ be an ST graph that is DI and has no twins, $v$ be a source vertex of $G$, and $d(w)$ be minimum among the neighbors of $v$.  Since $G$ is DI, it follows that $v \to w$ belongs to some diclique $N(y) \to N(x)$ for a disimplicial arc $x \to y$.  Then $N(w) = N(y)$ which implies that $w = y$ as $G$ is twin-free.  Consequently, the thin neighbor of $v$ is $\TN(v) = w$.  Moreover, as $x \to y = w$ is disimplicial, it follows that the thin neighbor of $w$ is $\TN(w) = x$.  Suppose, to obtain a contradiction, that $x \neq v$.  Then, since $d(x) < d(v)$, we conclude that there exists $z \in N(v) \setminus N(x)$.  Thus, $\{v\} \to \{w,z\}$ is a diclique that must be contained in $B = N(b) \to N(a)$ for some disimplicial arc $a \to b$.  The same arguments used before allow us to conclude that $w = b = \TN(v)$ and $\TN(w) = a$.  This is clearly a contradiction because $x = \TN(w)$ does not belong to $B$ as it is not adjacent to $z$. We conclude, therefore, that $v = \TN(w) = \TN(\TN(v))$.  Analogously, $w = \TN(\TN(w))$ for every sink vertex $w$, thus every vertex belongs to a thin arc.  That is, the set of thin arcs is a perfect matching of $G$.
\end{proof}

\begin{corollary}
 If an ST graph with no twins is DI, then\/ $\JOIN(G)$ is an order graph.
\end{corollary}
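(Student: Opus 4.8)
The plan is to check directly that $D = \JOIN(G)$ satisfies the three defining properties of an order graph: that it is oriented, reflexive, and transitive. Transitivity and orientedness are already in hand: since every DI digraph is WDI and $G$ is an ST graph with no twins, Theorem~\ref{thm:WDI characterization} guarantees that $D$ is a transitive oriented graph. Hence the only thing left to establish is reflexivity, and this is precisely what the preceding lemma is for.

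I would argue reflexivity as follows. By the preceding lemma the set $M$ of thin arcs of $G$ is a perfect matching, and recall that by convention $\JOIN(G)$ denotes $\JOIN(G, M)$. Since $M$ is perfect we have $V(M) = V(G)$, so $\JOIN(G, M)$ has no ``diagonal'' vertices of the form $(v,v)$ with $v \notin V(M)$; every vertex of $D$ is of the form $(v,w)$ with $v \to w \in M$. As $M \subseteq E(G)$, we have $v \to w \in E(G)$, so the defining rule of $\JOIN$ --- that $(p,q) \to (x,y)$ is an arc of $D$ if and only if $p \to y$ is an arc of $G$ --- applied with $(p,q) = (x,y) = (v,w)$ shows that $(v,w) \to (v,w)$ is a loop of $D$. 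Thus every vertex of $D$ is reflexive. Combining, $D$ is reflexive, oriented, and transitive, i.e., an order graph (equivalently, by the remark in Section~\ref{sec:preliminaries}, $(V(D), \leq)$ is a finite poset).

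There is no real obstacle here: the combinatorial weight has already been carried by the preceding lemma and by Theorem~\ref{thm:WDI characterization}, and what remains is bookkeeping with the definition of $\JOIN$. The single point worth stating carefully is that perfectness of $M$ removes the diagonal vertices, which is exactly what makes every vertex automatically reflexive. If one prefers not to cite the ``oriented'' half of Theorem~\ref{thm:WDI characterization}, it can be re-derived in a line from the fact that $M$ consists of thin arcs: a $2$-cycle $(v,w) \to (x,y) \to (v,w)$ in $D$ forces $v \to y$ and $x \to w$ in $E(G)$, and since $w = \TN(v)$ and $y = \TN(x)$ these degree constraints give $w = y$ and hence $v = x$.
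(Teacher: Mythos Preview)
Your argument is correct and matches the paper's intended route: the paper does not spell out a proof for this corollary, but the sentence preceding the lemma (``By Theorem~\ref{thm:WDI characterization}, we know that $\JOIN(G)$ is a transitive oriented graph; the following lemma proves that $\JOIN(G)$ must also be reflexive'') makes clear that the corollary is meant to follow exactly as you describe. Your added remark deriving orientedness directly from the thin-arc property is a welcome clarification, since Theorem~\ref{thm:WDI characterization} as stated only asserts transitivity.
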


Recall that order graphs are the graph theoretical equivalents of finite posets.  When $G$ is DI, the poset defined by $\JOIN(G)$ turns out to be what in order theory is known under the name of \emph{dedekind complete}.  We do not define what a dedekind complete poset is; in turn, we translate this concept in graph theoretic terms.

Let $D$ be a digraph.  Say that $u \in V(D)$ (resp.\ $\ell \in V(D)$) is an \emph{upper bound} (resp.\ a \emph{lower bound}) of $V \subseteq V(D)$ when $v \to u$ (resp.\ $\ell \to v$) for every $v \in V$.  We write $\mu(V)$ and $\lambda(V)$ to denote the sets of upper and lower bounds of $V$, respectively.  When $\mu(V)$ (resp.\ $\lambda(V)$) is nonempty, the set $V$ is said to be \emph{bounded from above (resp.\ below)}.  Every lower bound of $\mu(V)$ that belongs to $\mu(V)$ is a \emph{supremum} of $V$, while every upper bound of $\lambda(V)$ that belongs to $\lambda(V)$ is an \emph{infimum of $V$}.  Note that $V$ has at most one supremum (resp.\ infimum) when $D$ is an oriented graph.  A \emph{dedekind graph} is an order graph $D$ such that every $\emptyset \subset V \subseteq V(D)$ that is bounded from above has a supremum.  It is well known that an order graph $D$ is dedekind if and only if every $\emptyset \subset V \subseteq V(D)$ that is bounded from below has an infimum.

The reason why dedekind graphs come into play in the characterization of DI graphs has to do with the way $\JOIN(G)$ encodes the dicliques and disimplicial arcs of $G$.  Roughly speaking, a disimplicial arc $v \to w$ of $G$ is a transitive vertex $(v,w)$ of $\JOIN(G)$ where $N(v)$ and $N(w)$ corresponds to the lower and upper bounds $L, U$ of $\{(v,w)\}$, respectively.  Moreover, $(v,w)$ is both the infimum and supremum of $U$ and $L$, respectively.  This somehow explains why dedekind graphs appear when every diclique has a disimplicial arc.  The complete proof is given in the next theorem.  

\begin{lemma}\label{lem:dicliques in JOIN}
  Let $G$ be a digraph, $V, W$ be nonempty subsets of $V(G)$, $D = \JOIN(G)$, and $L = \{(v,v') \in V(D) \mid v \in V\}$ and $U = \{(w',w) \in V(D) \mid w \in W\}$.  Then,  $V \to W$ is a diclique of $G$ if and only if $L \subseteq \lambda(U)$ and $U \subseteq \mu(L)$.  Furthermore, $V \to W$ is a maximal diclique exactly when $L = \lambda(U)$ and $U = \mu(L)$.
\end{lemma}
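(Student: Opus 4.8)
The plan is to unwind the definitions of $\JOIN$, $\lambda$, and $\mu$. In $D = \JOIN(G)$ an arc $(a,b) \to (c,d)$ is present exactly when $a \to d \in E(G)$, so adjacency in $D$ depends only on the first coordinate of the tail and the second coordinate of the head. The argument also uses the following bookkeeping: each $v \in V$ occurs as the first coordinate of exactly one vertex of $D$ and each $w \in W$ as the second coordinate of exactly one vertex of $D$ --- in the intended regime, where the thin arcs of $G$ form a perfect matching and $V$, $W$ are sets of sources and of sinks, these vertices are $(v,\TN(v))$ and $(\TN(w),w)$. Consequently the set of first coordinates of the members of $L$ is exactly $V$, and the set of second coordinates of the members of $U$ is exactly $W$; after this, the lemma reduces to reading off quantifiers.

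For the biconditional: for $\ell=(v,v')\in L$ and $u=(w',w)\in U$ the adjacency rule gives $\ell\to u\in E(D)$ iff $v\to w\in E(G)$. Hence ``$L\subseteq\lambda(U)$'' unfolds to ``$v\to w\in E(G)$ for every $(v,v')\in L$ and every $(w',w)\in U$'', which by the bookkeeping is exactly ``$v\to w\in E(G)$ for all $v\in V$, $w\in W$'', i.e.\ ``$V\to W$ is a diclique of $G$''. The same identity shows that ``$U\subseteq\mu(L)$'' is equivalent to the same statement, so in particular the conjunction of the two inclusions is equivalent to $V\to W$ being a diclique.

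For the ``furthermore'' clause, first suppose $V\to W$ is a maximal diclique; by the biconditional $L\subseteq\lambda(U)$ and $U\subseteq\mu(L)$, so only the reverse inclusions remain. If $x=(a,b)\in\lambda(U)$ then $a\to w\in E(G)$ for every $w\in W$, so $(V\cup\{a\})\to W$ is a diclique; by maximality $a\in V$, hence $x\in L$. Symmetrically, $(c,d)\in\mu(L)$ yields the diclique $V\to(W\cup\{d\})$, forcing $d\in W$ and $(c,d)\in U$. Conversely, assume $L=\lambda(U)$ and $U=\mu(L)$; then $V\to W$ is a diclique, and if $\widehat V\to\widehat W$ were a diclique with $V\subseteq\widehat V$, $W\subseteq\widehat W$ and $(\widehat V,\widehat W)\neq(V,W)$, then, writing $\widehat L$, $\widehat U$ for the corresponding subsets of $V(D)$ (so $L\subseteq\widehat L$, $U\subseteq\widehat U$), the biconditional applied to $\widehat V\to\widehat W$ together with the inclusion-reversing behaviour of $\lambda,\mu$ gives $\widehat L\subseteq\lambda(\widehat U)\subseteq\lambda(U)=L$ and $\widehat U\subseteq\mu(\widehat L)\subseteq\mu(L)=U$; hence $\widehat L=L$, $\widehat U=U$, and the bookkeeping forces $\widehat V=V$ and $\widehat W=W$, a contradiction.

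The only step that is more than a direct substitution is the bookkeeping observation itself --- that $L$ realises exactly $V$ as its set of first coordinates and $U$ exactly $W$ as its set of second coordinates, so that a vertex $x\in\lambda(U)\setminus L$ genuinely supplies a vertex outside $V$. I would therefore isolate it first, making explicit that it rests on the thin arcs of $G$ forming a perfect matching and on $V$, $W$ being sets of sources and sinks (equivalently, on each relevant vertex of $G$ having a unique incarnation in $D$), so that passing between $V(G)$ and $V(D)$ neither loses nor duplicates vertices.
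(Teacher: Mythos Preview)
Your approach is the same as the paper's: both simply unwind the definitions of $\JOIN$, $\lambda$, and $\mu$. The paper's proof is a two-sentence version of your first two paragraphs --- it observes that $v\to w\in E(G)$ for all $v\in V$, $w\in W$ iff $(v,v')\to(w',w)\in E(D)$ for all $(v,v')\in L$, $(w',w)\in U$, and then says maximality follows ``using the same argument'' --- so your expanded treatment of the maximal case is just making that explicit.

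Where you go beyond the paper is in isolating the bookkeeping step, and you are right to flag it: the equivalence above needs every $v\in V$ to actually occur as a first coordinate of some vertex of $D$ (and dually for $W$), otherwise $L$ can be too small and the right-hand side can hold vacuously while $V\to W$ is not a diclique. The paper's proof leaves this implicit. It is harmless in the only place the lemma is used (Theorem~\ref{thm:DI characterization}), where $G$ is a twin-free ST graph whose thin arcs form a perfect matching and $V$, $W$ consist of sources and sinks respectively --- exactly the regime you describe.
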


\begin{proof}
  Just observe that, by definition, $v \to w \in E(G)$ for every $v \in V$ and $w \in W$ if and only if $(v, v') \to (w',w) \in E(D)$ for every $(v,v') \in L$ and $(w', w) \in U$.  That is, $V \to W$ is a diclique of $G$ if and only if $L \subseteq \lambda(U)$ and $U \subseteq \mu(L)$.  Moreover, using the same argument, the maximality of $V \to W$ occurs precisely when $L = \lambda(U)$ and $U = \mu(L)$.
\end{proof}

\begin{theorem}\label{thm:DI characterization}
 Let $G$ be an ST graph with no twins.  Then $G$ is DI if and only if\/ $\JOIN(G)$ is dedekind.
\end{theorem}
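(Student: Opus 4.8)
The plan is to read everything through the dictionary of Lemma~\ref{lem:dicliques in JOIN}, which turns (nonempty) maximal dicliques of $G$ into Galois-closed pairs $(L,U)$ — those with $L=\lambda(U)$ and $U=\mu(L)$ — in the digraph $D=\JOIN(G)$, and to match the least upper bounds of the poset $D$ with the disimplicial arcs lying inside the maximal dicliques of $G$. The key remark is the following: once $D$ is known to be an order graph, its thin arcs form a perfect matching (the corollary preceding Lemma~\ref{lem:dicliques in JOIN}), so every vertex of $D$ is a pair $(v,w)$ with $v\to w$ a thin arc of $G$; moreover $D$ is then transitive, hence \emph{every} vertex of $D$ is transitive, hence by Theorem~\ref{thm:disimplicial is transitive} \emph{every} thin arc of $G$ is disimplicial. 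Therefore, when $D$ is an order graph, a maximal diclique $V\to W$ of $G$ is reduced if and only if it contains a thin arc, and by the correspondence of Lemma~\ref{lem:dicliques in JOIN} this happens exactly when the associated closed pair satisfies $L\cap U\neq\emptyset$ (an element $(x,y)\in L\cap U$ is precisely a thin arc $x\to y$ of $G$ with $x\in V$ and $y\in W$).

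For the forward implication I assume $G$ is DI; then $D$ is an order graph by the cited corollary, so only the least-upper-bound property is left to check. Let $\emptyset\subset X\subseteq V(D)$ be bounded from above, let $V$ be the set of first coordinates of the pairs in $X$, and let $W^*$ be the set of second coordinates of the pairs in $\mu(X)$ (both nonempty, the latter because $X$ is bounded above). Unwinding the definition of $\JOIN$ shows $v\to b\in E(G)$ for every $v\in V$ and $b\in W^*$, i.e.\ $V\to W^*$ is a diclique of $G$; I extend it to a maximal diclique $\bar V\to\bar W$. Since $G$ is DI this diclique is reduced, hence contains a disimplicial arc $x\to y$, and by the uniqueness of the maximal diclique through a disimplicial arc, $\bar V=N^-(y)$ and $\bar W=N^+(x)$; also $x\to y$ is thin by Lemma~\ref{lem:disimplicial is thin}, so $(x,y)\in V(D)$. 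Then $(x,y)=\sup X$: it is an upper bound of $X$ because $V\subseteq\bar V=N^-(y)$, and it is a lower bound of $\mu(X)$ because $W^*\subseteq\bar W=N^+(x)$; hence it is a supremum of $X$, so $D$ is dedekind.

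For the converse I assume $D$ is dedekind, hence an order graph, hence reflexive; since the ST graph $G$ has no loops, $\JOIN(G)$ can be reflexive only if the thin arcs of $G$ form a perfect matching, and then (by the opening remark) every thin arc of $G$ is disimplicial. Let $V\to W$ be an arbitrary nonempty maximal diclique of $G$; by the remark it suffices to produce a thin arc inside it. With $L,U$ as above — nonempty (since $V,W$ are nonempty and the thin arcs form a perfect matching) and satisfying $L=\lambda(U)$, $U=\mu(L)$ by Lemma~\ref{lem:dicliques in JOIN} — the set $L$ is nonempty and bounded above (by any element of $U$), so the dedekind hypothesis yields $(x,y)=\sup L$. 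Being an upper bound of $L$, $(x,y)$ lies in $\mu(L)=U$; being a lower bound of the upper bounds of $L$, it lies in $\lambda(\mu(L))=\lambda(U)=L$. Thus $(x,y)\in L\cap U$, which forces $x\in V$ and $y\in W$; since $(x,y)\in V(D)$, the arc $x\to y$ is a thin arc of $G$ lying inside $V\to W$. Hence every maximal diclique of $G$ is reduced, i.e.\ $G$ is DI.

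The step I expect to be the main obstacle is ensuring that the supremum lands exactly in $L\cap U$ rather than somewhere strictly above $L$: this rests on the Galois identity $\lambda(\mu(L))=L$, available only because the diclique is maximal, and it is precisely the place where the full least-upper-bound property is genuinely used — explaining why the equivalence is with dedekind, and not merely complete, posets. A secondary source of friction is the degenerate cases: one should restrict to nonempty dicliques (as in Lemma~\ref{lem:dicliques in JOIN}) and use that in an ST graph a nonempty diclique has $V$ made of sources and $W$ of sinks, so that $L$ and $U$ are nonempty and the pairs $(x,y)$ arising above are honest vertices of $D$ attached to thin arcs — which in turn relies on the thin arcs forming a perfect matching.
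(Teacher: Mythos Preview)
Your proof is correct and follows essentially the same route as the paper, using Lemma~\ref{lem:dicliques in JOIN} to pass between maximal dicliques of $G$ and Galois-closed pairs $(L,U)$ in $D=\JOIN(G)$, and identifying the disimplicial arc with the supremum. One small slip: the implication ``$D$ an order graph $\Rightarrow$ the thin arcs of $G$ form a perfect matching'' is not what the corollary you cite says (that corollary and the lemma before it both take $G$ DI as hypothesis); the implication is nonetheless true, and you supply the right reason yourself in the converse (reflexivity of $D$ forces every vertex of $\JOIN(G)$ to come from an actual arc of $G$, hence from a thin arc). The only substantive difference from the paper is in the converse: having found $(v,w)\in L\cap U$, the paper pushes on (using transitivity of $D$) to $L=N_D^-((v,w))$ and $U=N_D^+((v,w))$, which unwinds to $V=N_G^-(w)$ and $W=N_G^+(v)$ and exhibits the disimpliciality of $v\to w$ directly; you instead front-load Theorem~\ref{thm:disimplicial is transitive} to conclude once and for all that every thin arc is disimplicial, so that membership in $L\cap U$ is already enough. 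Both arguments are valid and of comparable length.
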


\begin{proof}
  Suppose $G$ is DI, let $D = \JOIN(G)$, and consider any nonempty $M \subseteq V(D)$ bounded from above.  Let (a) $U = \mu(M)$ and (b) $L = \lambda(U)$, and observe that (c) $U = \mu(L)$.  By definition, $L = \{(v,v') \in V(D) \mid v \in V\}$ and $U = \{(w',w) \in V(D) \mid w \in W\}$ for some  $V, W \subseteq V(G)$.  By Lemma~\ref{lem:dicliques in JOIN}, $V \to W$ is a maximal diclique of $G$, thus it contains some disimplicial arc $v \to w$.  By Lemma~\ref{lem:disimplicial is thin}, $v \to w$ is a thin arc, thus $(v,w)$ is a vertex of $D$.  Moreover, $(v,w) \in L \cap U$ because $v \in V$ and $w \in W$.  Then, by (b) and (c), it follows that $(v,w)$ is the supremum of $L$ and the infimum of $U$, while by (a), $(v,w)$ is a supremum of $M$ as well.
 
  For the converse, suppose $V \to W$ is a maximal diclique of $G$ and let (a) $L = \{(v,v') \in V(D) \mid v \in V\}$ and (b) $U = \{(w',w) \in V(D) \mid w \in W\}$.  By Lemma~\ref{lem:dicliques in JOIN}, $L = \lambda(U)$ and $U = \mu(L)$, hence, since $D$ is dedekind, it follows that $L \cap U$ contains some vertex $(v,w)$ such that (c) $L = N^-((v,w))$ and (d) $U = N^+((v,w))$.  By (a) and (c), and considering how $\JOIN$ works, we conclude that $V = N_G(w)$, while $W = N_G(v)$ by (a) and (d).  In other words, $v \to w$ is a disimplicial arc of $V \to W$. 
\end{proof}

\begin{corollary}\label{cor:dedekind characterization}
  A digraph $D$ is dedekind if and only if\/ $\SPLIT(D)$ is DI.
\end{corollary}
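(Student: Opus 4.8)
The plan is to obtain this from Theorem~\ref{thm:DI characterization} through the duality between the operators $\SPLIT$ and $\JOIN$. The central observation is that for every order graph $D$ one has (i) $\SPLIT(D)$ is a \emph{twin-free} ST graph, and (ii) $\JOIN(\SPLIT(D)) \cong D$; granting these, Theorem~\ref{thm:DI characterization} applied to $G = \SPLIT(D)$ says precisely that $\SPLIT(D)$ is DI if and only if $\JOIN(\SPLIT(D)) \cong D$ is dedekind. So the forward implication is immediate (a dedekind $D$ is an order graph by definition), and the work in the converse is to argue that $\SPLIT(D)$ being DI forces $D$ to be an order graph.

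To establish (i) and (ii), fix an order graph $D$. For (i): $\OUT$-vertices of $\SPLIT(D)$ are sources and $\IN$-vertices are sinks, so a twin pair consists of two sources or of two sinks; but $\OUT(u)$ and $\OUT(v)$ are twins exactly when $N^+_D(u) = N^+_D(v)$, and since $D$ is reflexive ($u \in N^+_D(u)$) and antisymmetric this forces $u \to v \to u$, hence $u = v$, and symmetrically for $\IN$-vertices. For (ii): if $v \to w$ is an arc of $D$ with $v \neq w$, then transitivity gives $\{u \mid u \to v\} \subseteq \{u \mid u \to w\}$ while antisymmetry places the reflexive vertex $w$ in the larger set but not the smaller, so $d^-_D(v) < d^-_D(w)$, and dually $d^+_D(v) < d^+_D(w)$. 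As $d_{\SPLIT(D)}(\IN(z)) = d^-_D(z)$ and $d_{\SPLIT(D)}(\OUT(z)) = d^+_D(z)$, the unique neighbour of $\OUT(v)$ of minimum degree is $\IN(v)$ and the unique neighbour of $\IN(v)$ of minimum degree is $\OUT(v)$; hence $\TN(\OUT(v)) = \IN(v)$ and $\TN(\IN(v)) = \OUT(v)$, so the thin arcs of $\SPLIT(D)$ are precisely $M = \{\OUT(v) \to \IN(v) \mid v \in V(D)\}$. Since $\JOIN(\SPLIT(D))$ denotes $\JOIN(\SPLIT(D), M)$, Lemma~\ref{lem:bip then join} now gives $\JOIN(\SPLIT(D)) \cong D$, and together with Theorem~\ref{thm:DI characterization} this settles the forward implication.

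For the converse I would argue as follows: if $\SPLIT(D)$ is DI then it is WDI, and the theorem characterizing when $\SPLIT(D)$ is WDI then yields that $D$ is transitive provided $D$ is already known to be reflexive and oriented; once $D$ is an order graph, (i), (ii) and Theorem~\ref{thm:DI characterization} give that $D \cong \JOIN(\SPLIT(D))$ is dedekind. The main obstacle is exactly this remaining step: deducing from ``$\SPLIT(D)$ is DI'' alone that $D$ is reflexive and oriented, equivalently that $\SPLIT(D)$ is twin-free, which is the structural property of $\SPLIT$ that must be pinned down before the WDI characterization and Lemma~\ref{lem:bip then join} can be brought to bear. Everything else is a routine translation through the two operators, using Proposition~\ref{prop:SPLIT dicliques} and Corollary~\ref{cor:SPLIT disimplicials} to transfer maximal (reduced) dicliques between $D$ and $\SPLIT(D)$.
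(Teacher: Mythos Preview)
Your forward direction is correct and follows the same route as the paper---indeed more carefully, since the paper's two-line argument simply writes ``By Lemma~\ref{lem:bip then join}, $D = \JOIN(G)$ for $G = \SPLIT(D)$, while, by Theorem~\ref{thm:DI characterization}, $D$ is dedekind if and only if $G$ is DI'' without verifying (as you do) that $\SPLIT(D)$ is twin-free and that its thin matching coincides with $\{\OUT(v)\to\IN(v)\}$, so that $\JOIN(\SPLIT(D))$ in the thin-arc sense really is $D$.

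The obstacle you flag in the converse is genuine, and the paper's proof does not address it either: invoking Lemma~\ref{lem:bip then join} requires $D$ reflexive, and invoking Theorem~\ref{thm:DI characterization} requires $\SPLIT(D)$ twin-free, yet neither is forced by ``$\SPLIT(D)$ is DI'' for an \emph{arbitrary} digraph. A single non-loop arc $v\to w$ already shows this: $\SPLIT(D)$ is then a one-arc ST graph, trivially DI, while $D$ is not reflexive and hence not dedekind. So the corollary as literally stated is false; it is evidently meant with $D$ an order graph, consistent with the surrounding subsection (and with the preceding corollary that $\JOIN(G)$ is an order graph whenever $G$ is twin-free DI). Under that reading, your observations (i) and (ii) dispatch both directions simultaneously via Theorem~\ref{thm:DI characterization}, and no further argument is needed. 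In short: your approach matches the paper's, your extra care is warranted, and the ``obstacle'' you isolate is not a defect of your proof but of the corollary's hypothesis.
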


\begin{proof}
  By Lemma~\ref{lem:bip then join}, $D = \JOIN(G)$ for $G = \SPLIT(D)$, while, by Theorem~\ref{thm:DI characterization}, $D$ is dedekind if and only if $G$ is DI.  
\end{proof}

By Theorem~\ref{thm:DI characterization} and Corollary~\ref{cor:dedekind characterization}, DI and dedekind graphs are equally hard to recognize, and the recognition can be done in polynomial time rather easily.  Just observe that a DI graph has at most $m$ maximal dicliques, one for each disimplicial arc.  Then, a recognition algorithm needs to traverse at most $m+1$ maximal dicliques before finding one that is not reduced.  To test if a diclique is reduced, it is enough to check that it contains a precomputed disimplicial arc.  Since the disimplicial arcs can be found in $O(\alpha m)$ time, and the $m+1$ dicliques of can be traversed in $O(nm^2)$ time~\cite{DiasFigueiredoSzwarcfiterDAM2007}, an $O(nm^2)$ time algorithm is obtained.  We now describe an $O(nm)$ time and $O(m)$ space algorithm that exploits the definition of dedekind graphs.  The following simple lemma is the key of the algorithm.

\begin{lemma}
  An order graph $D$ is dedekind if and only if for every $v,w \in V(G)$ with $\mu(\{v,w\}) \neq \emptyset$ there exists $u \in V(G)$ such that $|\mu(\{v,w\})| = d^+(u)$.
\end{lemma}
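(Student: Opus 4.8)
The plan is to reduce ``dedekind'' --- the property that every nonempty subset bounded from above has a supremum --- to the two-variable condition in the statement, in two stages.

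The first stage is the elementary observation that the displayed condition says exactly that \emph{every pair bounded from above has a supremum}. Suppose $u$ is an upper bound of $\{v,w\}$ in the order graph $D$, that is, $u\in\mu(\{v,w\})$. Then reflexivity and transitivity force $N^+(u)\subseteq\mu(\{v,w\})$: for any $z\in N^+(u)$ we have $v\to u\to z$ and $w\to u\to z$, hence $v\to z$ and $w\to z$. Since both $N^+(u)$ and $\mu(\{v,w\})$ are finite, $d^+(u)=|\mu(\{v,w\})|$ is therefore equivalent to $N^+(u)=\mu(\{v,w\})$, i.e.\ to ``$u$ is a lower bound of $\mu(\{v,w\})$ belonging to $\mu(\{v,w\})$'', i.e.\ to ``$u$ is the supremum of $\{v,w\}$''. (This is why the vertex $u$ in the hypothesis is to be taken as an upper bound of $\{v,w\}$: only such a $u$ can certify a supremum, and a degree count then suffices.) The forward implication of the lemma follows at once: if $D$ is dedekind, a bounded-above pair $\{v,w\}$ has a supremum $u$, and then $d^+(u)=|\mu(\{v,w\})|$.

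For the converse I would show, by induction on $|V|$, that every nonempty $V\subseteq V(D)$ that is bounded from above has a supremum; as $D$ is finite, this yields that $D$ is dedekind. The case $|V|=1$ is immediate ($v$ is the supremum of $\{v\}$, since $\mu(\{v\})=N^+(v)$, $v\in N^+(v)$, and $v$ is a lower bound of $N^+(v)$), and the case $|V|=2$ is precisely the hypothesis together with the first stage. For $|V|=k>2$, pick $x\in V$ and set $V'=V\setminus\{x\}$; from $\emptyset\neq\mu(V)\subseteq\mu(V')$ we get that $V'$ is bounded from above, so by induction it has a supremum $s$. One checks $\mu(V')=N^+(s)$ --- the inclusion $\mu(V')\subseteq N^+(s)$ because $s$ is a lower bound of $\mu(V')$, and $N^+(s)\subseteq\mu(V')$ by the transitivity argument of the first stage (every $v\in V'$ satisfies $v\to s$). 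Hence $\mu(V)=\mu(V')\cap N^+(x)=N^+(s)\cap N^+(x)=\mu(\{s,x\})$, which is nonempty, so the $|V|=2$ case applies to $\{s,x\}$ and produces $u=\sup\{s,x\}$. Finally $u$ is the supremum of $V$: it is an upper bound of $V$ because $u\in\mu(\{s,x\})=\mu(V)$, and it is a lower bound of $\mu(V)=\mu(\{s,x\})$ because $u=\sup\{s,x\}$.

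The only genuinely non-mechanical point is the first-stage observation that, for a \emph{common upper bound} $u$ of $\{v,w\}$, the degree equality $d^+(u)=|\mu(\{v,w\})|$ already pins $u$ down as the supremum; this is what replaces a semilattice-type test by a degree comparison, and it reappears in the induction as the identity $\mu(V')=N^+(s)$. Everything else --- the set manipulation $\mu(V)=\mu(\{s,x\})$ and the final verification --- is routine bookkeeping, and the stated equivalence ``$D$ is dedekind iff every set bounded from below has an infimum'' is not needed.
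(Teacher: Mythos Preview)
Your argument is correct and follows the same line as the paper's proof: both establish that for an upper bound $u$ of $\{v,w\}$ the inclusion $N^+(u)\subseteq\mu(\{v,w\})$ holds by transitivity, so the degree equality $d^+(u)=|\mu(\{v,w\})|$ is equivalent to $u$ being the supremum. You also correctly flag that the vertex $u$ in the hypothesis must be read as lying in $\mu(\{v,w\})$, which the paper's statement leaves implicit but its proof and the ensuing algorithm both assume.

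The one difference is in the converse direction: the paper simply invokes the order-theoretic fact that a finite poset in which every bounded-above pair has a supremum is dedekind complete (``it is well known \ldots''), whereas you supply that fact explicitly via the induction $\mu(V)=\mu(V')\cap N^+(x)=N^+(s)\cap N^+(x)=\mu(\{s,x\})$. Your version is more self-contained; the paper's is shorter but relies on the reader knowing the poset lemma.
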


\begin{proof}
  Suppose $D$ is dedekind and let $u$ be the supremum of $\{v,w\}$, for $\{v,w\} \subseteq V(D)$ bounded from above.  By definition, $v \to u \in V(D)$ and $w \to u \in V(D)$, thus $N^+(u) \subseteq \mu(\{v,w\})$ because $u$ is transitive.  Also by definition, $u \to z$ for every $z \in \mu(\{v,w\})$, thus $\mu(\{v,w\}) \subseteq N^+(u)$.  Therefore, $|\mu(\{v,w\})| = |N^+(u)| = d^+(u)$.

  For the converse, observe again that $N^+(u) \subseteq \mu(\{v,w\})$ for every $u \in \mu(\{v,w\})$, because $u$ is transitive.  So, if $u \in \mu(\{v,w\})$ has degree $|\mu(\{v,w\})|$, then $N^+(u) = \mu(\{v,w\})$, which means that $\{v,w\}$ has a supremum.  That is, $\{v,w\}$ has a supremum for every $\{v,w\} \subseteq V(D)$ bounded from above.  It is well known (taking into account that dedekind graphs correspond to dedekind complete finite posets) that, in this case, $D$ is dedekind.
\end{proof}

The algorithm to determine if an order digraph $D$ is dedekind traverses $\mu(\{v,w\})$, for each pair of vertices $v,w \in V(G)$, searching for a vertex $u$ with $d^+(u) = |\mu(v,w)|$.  For the implementation, an outer loop traverses each $v \in V(G)$ and an inner loop traverses each $w \in V(G) \setminus \{v\}$.  Before the inner loop begins, all the vertices in $N^+(v)$ are marked in $O(d(v))$ time.  Then, in the inner loop, $\mu(\{v,w\})$ is obtained in $O(d(w))$ time by filtering those vertices of $N^+(w)$ that are marked.  The degree of all the vertices in $\mu(\{v,w\})$ is the evaluated in $O(d(w))$ time as well.  The total time required by the algorithm is, therefore, \[O\left(\sum_{v \in V(G)}\left(d(v) + \sum_{w \in V(G)}d(w)\right)\right) = O(nm),\]
while the space complexity is $O(m)$ bits.  Since order graphs can be recognized in $O(\alpha m)$ time and $O(m)$ space, we conclude that the recognition DI and dedekind graphs takes $O(nm)$ time and $O(m)$ space.

\section{Results on bipartite graphs and further remarks}
\label{sec:further remarks}

A \emph{bipartite graph} is a triple $G = (V, W, E)$ where an unordered pair $vw$ belongs to $E$ only if $v \in V$ and $w \in W$.  An edge $vw$ is \emph{bisimplicial} when every vertex in $N(v)$ is adjacent to all the vertices in $N(w)$.  By replacing each $vw$ by an arc $v \to w$, an ST graph $\vec{G}$ is obtained.  Moreover, an edge $vw$ of $G$ is bisimplicial precisely when $v \to w$ is disimplicial in $\vec{G}$.  So, the algorithms in this article can be applied directly to bipartite graphs so as to solve the corresponding problems.  In this section we summarize the results for bipartite graphs while we provide further remarks.

In Section~\ref{sec:disimplicial vs transitive} we proved that listing the bisimplicial edges of a bipartite graph and finding the transitive vertices of a digraph are equally hard problems.  The good news is that the bisimplicial edges of a bipartite graph can be found in $O(\alpha m)$ time, improving over the previous $O(nm)$ time algorithm; the bad news is that we cannot improve this algorithm further using only $O(m)$ space, unless an $o(\alpha m)$ time algorithm to find the transitive vertices of a digraph is provided.  

In Section~\ref{sec:disimplicial elimination} we describe an $O(\min\{\Delta \eta, m\}m)$ time and $O(m)$ space algorithm to compute a maximal disimplicial elimination of $\vec{G}$.  When applied to bipartite graphs, a maximal elimination scheme $S$ is obtained.  Since $\eta < \Delta$, our algorithm improves the worst-case time bound of~\cite{Bomhoff2011} for all the bipartite graphs with $\Delta = o(\sqrt{m})$.  Golumbic and Goss~\cite{GolumbicGossJGT1978} proved that $S$ is perfect whenever $G$ admits a perfect elimination scheme, thus the algorithm can be used to recognize if a sparse graph is perfect elimination bipartite.  The concept of perfect elimination graphs can be generalized to digraphs and disimplicial eliminations.  Just say that a digraph $D$ is \emph{perfect disimplicial elimination} whenever it admits a disimplicial elimination $S$ such that $G \setminus V(S)$ has no arcs.  Unfortunately, finding a maximal disimplicial elimination is not enough to determine if $D$ is perfect, as it is shown in Figure~\ref{fig:maximal not perfect}.  So, the recognition of perfect disimplicial elimination remains open.

\begin{figure}
  \centering
  \includegraphics{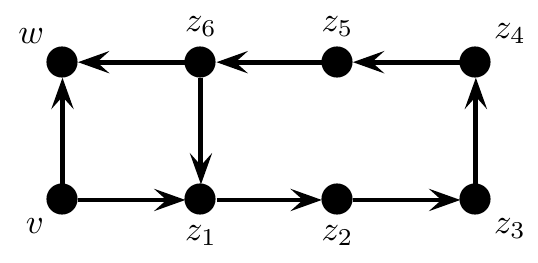}
  \caption{A perfect disimplicial elimination digraph with a non-perfect maximal disimplicial elimination: $v \to w$ and $v \to z_1, z_2 \to z_3, z_4 \to z_5, z_6 \to w$ are maximal disimplicial eliminations.}\label{fig:maximal not perfect}
\end{figure}

In Section~\ref{sec:disimplicial elimination} we also consider the problem of computing a maximal disimplicial $M$-elimination, for an input matching $M$, for which we provide an $O(\alpha m)$ time and $O(m)$ space algorithm.  Rose and Tarjan~\cite{RoseTarjanSJAM1978} proved that this problem is harder than determining if a given digraph is transitive.  Up to this date, the best algorithm to determine if a sparse graph is transitive costs $O(\alpha m)$ time and $O(m)$ space.  So, the problem is well solved, without using more than $O(m)$ space, unless better algorithms for recognizing transitive digraphs are found.

Recall one of the motivations for finding a maximal disimplicial elimination is to be able to perform some iterations of the Gaussian elimination process on a sparse matrix $M$ with the guaranty that no zero entry will change into a non-zero value.  Being $M$ sparse, we expect $\alpha_G \approx 1$ and $\Delta_G \approx 1$ for $G = G(M)$.  If so, then finding the disimplicial elimination and applying the corresponding iterations of the Gaussian elimination require linear time.  That is, our algorithm can be used to preprocess $M$, say before solving the system $Mx = b$.  In the worst case no zero fill-in entry is found and thus $M$ remains the same.  Yet, the extra time paid for this examination is low.

In Section~\ref{sec:WDI and DI} we deal with the classes of WDI and DI digraphs.  We noted that every order graph $D$ is uniquely associated with a twin-free ST graph $G$ that is WDI, namely $G = \SPLIT(D)$.  In fact, each $v \in V(D)$ gets transformed into the disimplicial arc $\OUT(v) \to \IN(v)$ of $G$, thus $G$ has a perfect matching of disimplicial arcs.  The converse is also true, any ST graph that has a perfect matching of disimplicial arcs must be isomorphic to $\SPLIT(D)$ for some order graph $D$.  We remark that the order relation $\to$ of $D$ is somehow preserved in $G$.  Indeed, note that $v \to w \in E(D)$ only if $w \to v \not\in E(D)$, thus $\OUT(v) \to \IN(w) \in E(G)$ while $\OUT(w) \to \IN(v) \not\in E(G)$.  Hence, by transitivity, $v \to w \in E(D)$ if and only if $N(\OUT(v)) \subset N(\OUT(w))$ and $N(\IN(w)) \subset N(\IN(v))$.  In this section we also proved that $G$ is also DI whenever $D$ is a dedekind graph.  Moreover, each $A \subseteq V(D)$ with supremum $u$ is associated with a reduced biclique $V \to W$ such that $V = \{\OUT(v) \mid v \to u \in E(D)\}$ and $W = \{\IN(w) \mid u \to w \in E(D)\}$.   Note that, in particular, $\OUT(u) \to \IN(u)$ is the disimplicial arc of $V \to W$.  

\newcommand{\doi}[1]{doi: \href{http://dx.doi.org/#1}{\dodoi{#1}}}
\newcommand{\dodoi}{\begingroup \urlstyle{rm}\Url}

% \bibliographystyle{notabbrvnat.bst}
% \bibliography{../../biblio/biblio}

\begin{thebibliography}{14}
\providecommand{\natexlab}[1]{#1}
\providecommand{\url}[1]{\texttt{#1}}
\expandafter\ifx\csname urlstyle\endcsname\relax
  \providecommand{\doi}[1]{doi: #1}\else
  \providecommand{\doi}{doi: \begingroup \urlstyle{rm}\Url}\fi

\bibitem[Bomhoff(2011)]{Bomhoff2011}
M.~Bomhoff.
\newblock Recognizing sparse perfect elimination bipartite graphs.
\newblock In \emph{Computer science---theory and applications}, vol. 6651 of
  \emph{Lecture Notes in Comput. Sci.}, pp. 443--455. Springer, Heidelberg,
  2011.
\newblock \doi{10.1007/978-3-642-20712-9_35}.

\bibitem[Bomhoff and Manthey(2013)]{BomhoffMantheyDAM2013}
M.~Bomhoff and B.~Manthey.
\newblock Bisimplicial edges in bipartite graphs.
\newblock \emph{Discrete Appl. Math.}, 161\penalty0 (12):\penalty0 1699--1706,
  2013.
\newblock \doi{10.1016/j.dam.2011.03.004}.

\bibitem[Chiba and Nishizeki(1985)]{ChibaNishizekiSJC1985}
N.~Chiba and T.~Nishizeki.
\newblock Arboricity and subgraph listing algorithms.
\newblock \emph{SIAM J. Comput.}, 14\penalty0 (1):\penalty0 210--223, 1985.
\newblock \doi{10.1137/0214017}.

\bibitem[Dias et~al.(2007)Dias, de~Figueiredo, and
  Szwarcfiter]{DiasFigueiredoSzwarcfiterDAM2007}
V.~M.~F. Dias, C.~M.~H. de~Figueiredo, and J.~L. Szwarcfiter.
\newblock On the generation of bicliques of a graph.
\newblock \emph{Discrete Appl. Math.}, 155\penalty0 (14):\penalty0 1826--1832,
  2007.
\newblock \doi{10.1016/j.dam.2007.03.017}.

\bibitem[Goh and Rotem(1982)]{GohRotemIPL1982}
L.~Goh and D.~Rotem.
\newblock Recognition of perfect elimination bipartite graphs.
\newblock \emph{Inform. Process. Lett.}, 15\penalty0 (4):\penalty0 179--182,
  1982.
\newblock \doi{10.1016/0020-0190(82)90101-6}.

\bibitem[Golumbic and Goss(1978)]{GolumbicGossJGT1978}
M.~C. Golumbic and C.~F. Goss.
\newblock Perfect elimination and chordal bipartite graphs.
\newblock \emph{J. Graph Theory}, 2\penalty0 (2):\penalty0 155--163, 1978.
\newblock \doi{10.1002/jgt.3190020209}.

\bibitem[Lin et~al.(2012)Lin, Soulignac, and
  Szwarcfiter]{LinSoulignacSzwarcfiterTCS2012}
M.~C. Lin, F.~J. Soulignac, and J.~L. Szwarcfiter.
\newblock Arboricity, {$h$}-index, and dynamic algorithms.
\newblock \emph{Theoret. Comput. Sci.}, 426/427:\penalty0 75--90, 2012.
\newblock \doi{10.1016/j.tcs.2011.12.006}.

\bibitem[Nash-Williams(1964)]{Nash-WilliamsJLMS1964}
C.~S. J.~A. Nash-Williams.
\newblock Decomposition of finite graphs into forests.
\newblock \emph{J. London Math. Soc.}, 39:\penalty0 12, 1964.
\newblock \doi{10.1112/jlms/s1-39.1.12}.

\bibitem[Rose and Tarjan(1978)]{RoseTarjanSJAM1978}
D.~J. Rose and R.~E. Tarjan.
\newblock Algorithmic aspects of vertex elimination on directed graphs.
\newblock \emph{SIAM J. Appl. Math.}, 34\penalty0 (1):\penalty0 176--197, 1978.
\newblock \doi{10.1137/0134014}.

\bibitem[Spinrad(2003)]{Spinrad2003}
J.~P. Spinrad.
\newblock \emph{Efficient graph representations}, vol.~19 of \emph{Fields
  Institute Monographs}.
\newblock American Mathematical Society, Providence, RI, 2003.

\bibitem[Spinrad(2004)]{SpinradDAM2004}
J.~P. Spinrad.
\newblock Recognizing quasi-triangulated graphs.
\newblock \emph{Discrete Appl. Math.}, 138\penalty0 (1-2):\penalty0 203--213,
  2004.
\newblock \doi{10.1016/S0166-218X(03)00295-6}.

\bibitem[Wallis and Zhang(1990)]{WallisZhangJCMCC1990}
W.~D. Wallis and G.-H. Zhang.
\newblock On maximal clique irreducible graphs.
\newblock \emph{J. Combin. Math. Combin. Comput.}, 8:\penalty0 187--193, 1990.

\bibitem[Wang(2003)]{WangCN2003}
T.-M. Wang.
\newblock On characterizing weakly maximal clique irreducible graphs.
\newblock In \emph{Proceedings of the {T}hirty-{F}ourth {S}outheastern
  {I}nternational {C}onference on {C}ombinatorics, {G}raph {T}heory and
  {C}omputing}, vol. 163, pp. 177--188, 2003.

\bibitem[Williams(2012)]{Williams2012}
V.~V. Williams.
\newblock Multiplying matrices faster than {C}oppersmith-{W}inograd [extended
  abstract].
\newblock In \emph{S{TOC}'12---{P}roceedings of the 2012 {ACM} {S}ymposium on
  {T}heory of {C}omputing}, pp. 887--898. ACM, New York, 2012.
\newblock \doi{10.1145/2213977.2214056}.

\end{thebibliography}
% 
% \end{document}

\end{document}